\title[AAMAS-2024 Formatting Instructions]{Simple $k$-crashing Plan with a Good Approximation Ratio}
\author{Ruixi Luo}
\affiliation{
  \institution{School of Intelligent Systems Engineering, Sun Yat-Sen University}
  \city{Shenzhen}
  \country{China}}
\email{luorx@mail2.sysu.edu.cn}
\author{Kai Jin}
\affiliation{
	\institution{School of Intelligent Systems Engineering, Sun Yat-Sen University}
	\city{Shenzhen}
	\country{China}}
\email{cscjjk@gmail.com}
\author{Zelin Ye}
\affiliation{
  \institution{School of Intelligent Systems Engineering, Sun Yat-Sen University}
  \city{Shenzhen}
  \country{China}}
\email{zlrelay@outlook.com}
\begin{abstract}
In project management, a project is typically described as an activity-on-edge network (AOE network), where each activity / job is represented as an edge of some network $N$ (which is a DAG). Some jobs must be finished before others can be started, as described by the topology structure of $N$. It is known that job $j_i$ in normal speed would require $b_i$ days to be finished after it is started. Given the network $N$ with the associated edge lengths $b_1,\ldots,b_m$, the duration of the project is determined, which equals the length of the critical path (namely, the longest path) of $N$.

To speed up the project (i.e. reduce the duration), the manager can crash a few jobs (namely, reduce the length of the corresponding edges) by investing extra resources into that job. However, the time for completing $j_i$ has a lower bound due to technological limits -- it requires at least $a_i$ days to be completed. Moreover, it is expensive to buy resources. Given $N$ and an integer $k\geq 1$, the $k$-crashing problem asks the minimum amount of resources required to speed up the project by $k$ days. We show a simple and efficient algorithm with an approximation ratio $\frac{1}{1}+\ldots+\frac{1}{k}$ for this problem.

We also study a related problem called $k$-LIS, in which we are given a sequence $\omega$ of numbers and we aim to find $k$ disjoint increasing subsequence of $\omega$ with the largest total length. We show a $(1-\frac{1}{e})$-approximation algorithm which is simple and efficient.
\end{abstract}
\keywords{Project duration, Network optimization, Greedy algorithm, Maximum flow, Critical path}
\newcommand{\BibTeX}{\rm B\kern-.05em{\sc i\kern-.025em b}\kern-.08em\TeX}
\begin{document}


\pagestyle{fancy}
\fancyhead{}


\maketitle 


\section{Introduction}

Critical Path Method (CPM) are commonly used in network analysis and project management. A fundamental problem in CPM is to optimize the total project cost subject to a prescribed completion date
\cite{Siemens1971}.
The general setting is that the project manager can expedite a few jobs by investing extra resources, hence reduce the duration of the entire project and thus meet the desired completion date. Meanwhile, the extra resources spent should be as few as possible.
A formal description of the problem is given in subsection~\ref{subsect:formal-description}.
In this paper, we revisit a simple incremental algorithm for solving this problem and prove that it has an approximation ratio $\frac{1}{1}+\ldots+\frac{1}{k}$, where $k$ denotes the amount of days the duration of the project has to be shortened (which is easily determined by the prescribed completion date and the original duration of the project).

The mentioned algorithm is based on greedy. It speeds up the project by only one day at a time, and repeats it $k$ times. Each time it adopts the minimum cost strategy to shorten the project (by one day).
This does not guarantee the minimum total cost for $k>1$ (see an example in subsection~\ref{subsect:counter-example}).
Nevertheless, it is a simple, efficient, and practical algorithm, which should have been applied by several engineers.
Therefore, a theoretical analysis on its approximation performance seems to be important and may benefit relevant researchers and engineers. Such an analysis is not easy and is absent in literature, hence we cover it in this paper.

\smallskip As a comparison, we also conduct an analysis of a similar greedy algorithm for another fundamental problem called \emph{$k$-LIS problem},
in which we are given a sequence $\omega$ and we aim to find $k$ disjoint increasing subsequence of $\omega$
with the largest total length.
The greedy algorithm for this problem is as the following:
Find the longest increasing subsequence (LIS) at one time and remove it. Repeat it for $k$ times. However, this does not guarantee an optimum solution as well; for example, consider $\omega=(3,4,5,8,9,1,6,7,8,9), k = 2$. Nevertheless, we prove that it is a $(1-\frac{1}{e})$-approximation algorithm.
Moreover, we show that for every $k>1$, there are cases for which the greedy solution is only $0.75$ times the maximum solution.

\subsection{Description of the $k$-crashing problem}\label{subsect:formal-description}

A project is considered as an activity-on-edge network (AOE network, which is a directed acyclic graph) $N$, where each activity / job of the project is an edge.
Some jobs must be finished before others can be started, as described by the topology structure of $N$.

It is known that job $j_i$ in normal speed would take $b_i$ days to be finished after it is started, and hence the (normal) duration of the project $N$, denoted by $d(N)$, is determined, which equals the length of the critical path (namely, the longest path) of $N$.

To speed up the project, the manager can crash a few jobs (namely, reduce the length of the corresponding edges) by investing extra resources into that job.
However, the time for completing $j_i$ has a lower bound due to technological limits - it requires at least $a_i$ days to be completed.
Following the convention, assume that the duration of a job has a linear relation with the extra resources put into this job; equivalently, there is a parameter $c_i$ (slope), so that
shortening $j_i$ by $d~(0\leq d \leq b_i-a_i)$ days costs $c_i\cdot d$ resources.

Given project $N$ and an integer $k\geq 1$, the \emph{$k$-crashing problem} asks the minimum cost to speed up the project by $k$ days.

\smallskip In fact, many people also care about the case of non-linear relation, especially the convex case
where shortening an edge becomes more difficult after a previous shortening.
Delightfully, the greedy algorithm performs equally well for this convex case.
Without any change, it still finds a solution with the approximation ratio $\frac{1}{1}+\ldots+\frac{1}{k}$.
For simplicity, we focus on the linear case throughout the paper, but our proofs are still correct for the convex case.

\subsection{Related work (about $k$-crashing)}

The first solution to the $k$-crashing problem was given by Fulkerson \cite{Fulkerson1961}
and by Kelley \cite{Kelley1961} respectively in 1961. The results in these two papers are independent, yet the approaches are essentially the same, as pointed out in \cite{Phillips1977}.
In both of them, the problem is first formulated into a linear program problem, whose dual problem
is a minimum-cost flow problem, which can then be solved efficiently.

Later in 1977, Phillips and Dessouky \cite{Phillips1977} reported another clever approach (denoted by Algorithm~PD). Similar as the greedy algorithm mentioned above, Algorithm~PD also consists of $k$ steps, and
each step it locates a minimal cut in a flow network derived from the original project network.
This minimal cut is then utilized to identify the jobs which should be expedite or de-expedite in order to reduce the project reduction. It is however not clear whether this algorithm can always find an optimal solution. We have a tendency to believe the correctness, yet cannot find a proof in \cite{Phillips1977}.

It is noteworthy to mention that Algorithm~PD shares many common logics with the greedy algorithm we considered. Both of them locates a minimal cut in some flow network and then use it to identify the set of jobs to expedite / de-expedite (there is no de-expedite allowed in the greedy algorithm, but there are de-expedites in Algorithm~PD) in the next round. However, the constructed flow networks are different. The one in the greedy algorithm  has only capacity upperbounds, whereas the one in Algorithm~PD has both capacity upperbounds and lowerbounds and is thus more complex.

The greedy algorithm we considered is much simpler and easier to implement comparing to all the approaches above.

Other approaches for the problem are proposed by Siemens \cite{Siemens1971} and Goyal \cite{GOYAL1996},
but these are heuristic algorithms without any guarantee -- approximation ratio are not proved in these papers.

\smallskip Many variants of the $k$-crashing problem have been studied in the past decades;
see \cite{proj-acc-2008}, \cite{fast-tracking-2019}, and the references within.

\subsection{Related work (about $k$-LIS)}


The $k$-LIS problem is a generalization of the well-known LIS problem,
most of the study about $k$-LIS lies on revealing the connections between LIS and Young tableau.
Utilizing the Young tableau, Schensted~\cite{schensted1961longest} managed to compute the length of the longest increasing and decreasing subsequence of $\omega$, which offers a way to solve the LIS problem.
Greene \cite{1974An} extended the result of \cite{schensted1961longest} to $k$-LIS problem and calculates the exact result (the $k$ increasing subsequences) of the $k$-LIS problem. First, it takes the largest $k$ rows of the Young tableau of $\omega$. Then it transform $\omega$ into a canonical form $\tilde{\omega}$. Next, it transform $\tilde{\omega}$ back to $\omega$ while modifying the largest $k$ rows of the Young tableau accordingly. In the end, we can get the modified $k$ rows as the optimal result of the $k$-LIS problem. The results of \cite{1974An} are applied in some problems related to LIS; see \cite{thomas2011longest}~\cite{ferrari2022preisach}.

Another approach to solve the $k$-LIS problem is as follows.
We first formulate the $k$-LIS problem into a minimum-cost network flow problem
(which contains $O(n^2)$ arcs, where $n$ denotes the length of the given sequence).
Basically, if some element $u$ of $\omega$ is smaller than $v$ of $\omega$ and $u$ is preceding $v$,
we build an arc from node $u'$ to node $v$.
By solving the flow problem, we obtain the optimum solution of the $k$-LIS problem.
However, the time complexity of this approach is much higher comparing to the greedy algorithm.

\smallskip Finding an approximation of the LIS can be solved in sublinear time~\cite{mitzenmacher2021improved}.
See more results about LIS in a monograph~\cite{romik2015surprising}.

\section{$k$-crashing problem}\label{sect:k-crashing}

\newcommand{\cost}{\mathsf{cost}}
\newcommand{\OPT}{\mathsf{OPT}}
\newcommand{\eba}{E_{b\setminus a}}

\begin{description}
	\item [Project $N$.]
	Assume $N=(V,E)$ is a directed acyclic graph with a single source node $s$ and a single sink node $t$ (a source node refers to a node without incoming edge, and a sink node refers to a node without outgoing edges).
	Each edge $e_i\in E$ has three attributes $(a_i,b_i,c_i)$, as introduced in
	subsection~\ref{subsect:formal-description}. \smallskip
	
	\item [Critical paths \& critical edges.]
	A path of $N$ refers to a path of $N$ from source $s$ to sink $t$. Its length is the total length of the edges included, and the length of edge $e_i$ equals $b_i$.
	The path of $N$ with the longest length is called a \emph{critical path}. The duration of $N$ equals the length of the critical paths. There may be more than one critical paths. An edge that belongs to some critical paths is called a \emph{critical edge}. \smallskip
	
	\item [Accelerate plan $X$.] Denote by $\eba$ the multiset of edges of $N$ that contains $e_i$ with a multiplicity $b_i-a_i$.
	Each subset $X$ of $\eba$ (which is also a multiset)
	is called an \emph{accelerate plan}, or \emph{plan} for short.
	The multiplicity of $e_i$ in $X$, denoted by $x_i~(0\leq x_i\leq b_i-a_i)$, describes how much length $j_i$ is shortened;
	i.e., $j_i$ takes $b_i-x_i$ days
	when plan $X$ is applied.
	The \emph{cost} of plan $X$, denoted by $\cost(X)$, is $\sum_i c_ix_i$.
	\smallskip
	
	\item [Accelerated project $N(X)$.] Define $N(X)$ as the project that is the same as $N$, but with $b_i$ decreased by $x_i$; in other words, $N(X)$ stands for the project optimized with plan $X$.
	\smallskip
	
	\item [$k$-crashing.] We say a plan $X$ is \emph{$k$-crashing}, if the duration of the project $N$ is shortened by $k$ when we apply plan $X$. \smallskip
	
	\item[Cut of $N$.] Suppose that $V$ is partitioned into two subsets $S,T$ which respectively contain $s$ and $t$.
	Then, the set of edges from $S$ to $T$ is referred to as a \emph{cut} of $N$. Notice that we cannot reach $t$ from $s$ if any cut of $N$ is removed.
\end{description}

Let $k_{max}$ be the duration of the original project $N$
minus the duration of the accelerated project $N(\eba)$.
Clearly, the duration of $N(X)$ is at least the duration of $N(\eba)$,
since $X$ is a subset of $\eba$.
It follows that a $k$-crashing plan only exists for $k\leq k_{max}$.

Throughout this section, we assume that $k\leq k_{max}$.

\smallskip The greedy algorithm in the following (see Algorithm~\ref{alg:greedy-crashing}) finds a $k$-crashing plan efficiently. It finds the plan incrementally -- each time it reduces the duration of the project by 1.

\begin{algorithm}[hbpt]
	\caption{Greedy algorithm for finding a $k$-crashing plan.}		\label{alg:greedy-crashing}
	\textbf{Input:} A project $N=(V,E)$.\\
	\textbf{Output:} A $k$-crashing plan $G$.
	\begin{algorithmic}
		\State $G\leftarrow \varnothing$;
		\For {$i = 1$ to $k$}
		\State Find the optimum $1$-crashing plan of $N(G)$, denoted by $A_i$;
		\State $G\leftarrow G \cup A_i$; (regard as multiset union)
		\EndFor
	\end{algorithmic}
\end{algorithm}

Observe that $G$ is an $i$-crashing plan of $N$ after the $i$-th iteration $G\leftarrow G\cup A_i$, as the duration of $N(G)$ is reduced by 1 at each iteration. Therefore, $G$ is a $k$-crashing plan at the end.

If a $1$-crashing plan of $N(G)$ does not exist in the $i$-th iteration $i\leq k$, we determine that there is no $k$-crashing plan; i.e. $k>k_{max}$.

\begin{theorem}\label{thm:upper bound}
	Let $G=A_1 \cup \ldots \cup A_k$ be the $k$-crashing plan found by Algorithm~\ref{alg:greedy-crashing}. Let $\OPT$ denote the optimal $k$-crashing plan. Then, $$\cost(G) \leq \sum_{i=1}^{k}\frac{1}{i}\cost(\OPT).$$
\end{theorem}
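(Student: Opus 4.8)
The plan is to establish the single-step estimate
$$\cost(A_i)\ \le\ \frac{1}{k-i+1}\,\cost(\OPT)\qquad (i=1,\dots,k)$$
and then sum over $i$. Since $\cost$ is additive under multiset union, this would give $\cost(G)=\sum_{i=1}^{k}\cost(A_i)\le\cost(\OPT)\sum_{i=1}^{k}\tfrac1{k-i+1}=\big(\tfrac11+\dots+\tfrac1k\big)\cost(\OPT)$, which is precisely the statement; so all the work lies in the displayed inequality.

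Fix $i$ and write $G_{i-1}:=A_1\cup\dots\cup A_{i-1}$ and $M:=N(G_{i-1})$. As observed after the algorithm, $M$ has duration $d(N)-(i-1)$ and $A_i$ is a cheapest $1$-crashing plan of $M$. I would first use the (standard, flow-based) characterization of the $1$-crashing subproblem: the minimum cost of a $1$-crashing plan of a project equals the minimum weight of an $s$--$t$ cut of its critical subgraph when each critical edge with remaining slack is given weight $c_j$ and each critical edge without slack is given weight $+\infty$; by the max-flow/min-cut (LP) duality this also equals the minimum total weight of a fractional cover of all critical $s$--$t$ paths. Hence it suffices to produce a \emph{fractional} plan $Z$ for $M$ with $\sum_{e_j\in P}Z_j\ge 1$ on every critical path $P$ of $M$ and $\cost(Z)\le\frac1{k-i+1}\cost(\OPT)$.

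To build such a $Z$, I would compare $\OPT=(o_1,\dots,o_m)$ with $G_{i-1}$ edge by edge. If $P$ is a critical path of $M$, its length in the original $N$ equals $\big(d(N)-i+1\big)+\sum_{e_j\in P}(G_{i-1})_j$, whereas applying $\OPT$ to $N$ turns $P$ into a path of length at most $d(N)-k$; subtracting yields $\sum_{e_j\in P}\big(o_j-(G_{i-1})_j\big)\ge k-i+1$. Now set $Y_j:=\max\{0,\,o_j-(G_{i-1})_j\}$. Then $Y$ is a legal plan for $M$ (each $Y_j$ lies between $0$ and the remaining slack of $e_j$ in $M$, using $o_j\le b_j-a_j$), it satisfies $\cost(Y)\le\cost(\OPT)$, and $\sum_{e_j\in P}Y_j\ge k-i+1$ for every critical path $P$ of $M$. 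Taking $Z:=\frac1{k-i+1}\,Y$ gives exactly the fractional cover required above, completing the single-step estimate and hence the theorem.

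The hard part is the accounting in the last paragraph: $\OPT$ is optimal for $N$, not for the already-accelerated network $M$, and part of what $\OPT$ would spend on an edge may "already have been spent" by $G_{i-1}$. The point to get right is that truncating $o_j-(G_{i-1})_j$ at $0$ only discards this overlap — it changes nothing on critical paths of $M$ (where the path-sums stay $\ge k-i+1$) and keeps $Z$ a legal plan. A smaller technical step is the return trip from the fractional $Z$ to an honest integral $1$-crashing plan of no greater cost; this is exactly what the integrality of the min-cut / max-flow LP underlying the $1$-crashing subproblem supplies, so I would state that as a preliminary lemma and merely invoke it here.
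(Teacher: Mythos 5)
Your proposal is correct, and its outer skeleton coincides with the paper's: both reduce the theorem to the per-step bound $(k-i+1)\,\cost(A_i)\le\cost(\OPT)$, and both use the very same witness --- your $Y_j=\max\{0,\,o_j-(G_{i-1})_j\}$ is exactly the multiset $\OPT\setminus(A_1\cup\ldots\cup A_{i-1})$ that the paper forms, shown by the same length accounting on critical paths to be $(k-i+1)$-crashing for $N(A_1\cup\ldots\cup A_{i-1})$. Where you genuinely diverge is in the core inequality ``a $t$-crashing plan of a project costs at least $t$ times its optimum $1$-crashing plan'' (the paper's Lemma~\ref{lemma:k-cost}). The paper proves this combinatorially: it peels a sequence of cuts $C_1,C_2,\ldots$ off the $t$-crashing plan and establishes the monotonicity $\cost(C_1)\le\cost(C_2)\le\ldots$ by an uncrossing argument (Propositions~\ref{prop:two-formulas} and~\ref{prop:contain cut}), which occupies two full subsections. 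You instead observe that scaling the witness by $\frac{1}{k-i+1}$ gives a feasible fractional cover of the $s$--$t$ paths of the critical subgraph and invoke integrality of the min $s$--$t$ cut LP; the two checks you flag do go through (the truncation at $0$ only increases each $Y_j$, so the path sums survive, and $Y_j$ vanishes on edges with no remaining slack, so the $\infty$-capacity edges carry no weight). One small point worth making explicit is that the covering constraints range over $s$--$t$ paths of the critical subgraph $M^*$, and these are exactly the critical paths of $M$ (a path built entirely from critical edges is itself critical) --- the paper uses this same fact silently in Proposition~\ref{prop:cut}. Your route is substantially shorter and arguably more transparent; what the paper's route buys is a purely combinatorial, LP-free argument, and its claimed extension to convex per-edge costs is immediate, whereas your version would need the (easy) extra observation that by convexity the marginal cost of the next unit of shortening lower-bounds the average cost of the units that $\OPT$ still has to buy.
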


The proof of the theorem is given in the following, which applies
Lemma~\ref{lemma:k-cost}. This applied lemma is nontrivial and is proved below.

\begin{lemma}\label{lemma:k-cost}
	For any project $N$, its $k$-crashing plan (where $k \leq k_{max}$) costs at least $k$ times the cost of the optimum $1$-crashing plan.
\end{lemma}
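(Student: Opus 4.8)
The plan is to set up an LP-duality / min-cut argument. Fix a project $N$ with $k \le k_{max}$, and let $X$ be any $k$-crashing plan. We want $\cost(X) \ge k \cdot c^*$, where $c^*$ is the cost of the optimum $1$-crashing plan of $N$. The natural tool is a max-flow min-cut characterization of the optimum $1$-crashing cost. Concretely, observe that reducing the duration of $N$ by one day means reducing the length of \emph{every} critical path of $N$ by at least one; the cheapest way to do this is governed by a cut in an auxiliary network built on the critical edges of $N$ (each critical edge $e_i$ gets capacity $c_i$, and an $s$--$t$ cut in this critical subnetwork corresponds to a set of edges whose simultaneous shortening by one day shortens every critical path). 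So I would first prove: $c^* = $ the minimum capacity of an $s$--$t$ cut in the critical subnetwork $N_{\mathrm{crit}}$, and hence by max-flow min-cut there is a flow $f$ of value $c^*$ supported on critical edges with $f_i \le c_i$ for all $i$.

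The heart of the argument is then a \emph{weighting / charging scheme}: I want to exhibit, for the given $k$-crashing plan $X = (x_i)$, a way to certify $\sum_i c_i x_i \ge k c^*$. The idea is to use the flow $f$ as a "dual certificate" but iterated across $k$ layers of duration. Intuitively, as we apply $X$ day by day, at each of the $k$ days there is still a critical path of the partially-accelerated project, and that path must be cut; more carefully, consider the $k$ nested "time slices" of the schedule. I would argue that for $\ell = 1, \ldots, k$, the set of edges $e_i$ with $x_i \ge \ell$ must contain an $s$--$t$ cut of $N_{\mathrm{crit}}$ — because after shortening each such edge by $\ell$, and every edge by $x_i \le$ its budget, the total reduction is $k$, so in particular on the (original) critical paths the length drops by $k \ge \ell$, forcing the edges shortened by $< \ell$ alone to be insufficient to disconnect... — actually the cleaner claim is: $\{e_i : x_i \ge \ell\}$ is a cut of $N_{\mathrm{crit}}$ for each $\ell \le k$, since any $s$--$t$ path in $N_{\mathrm{crit}}$ is a critical path of length $d(N)$, and if it avoided all edges with $x_i \ge \ell$ its length under $X$ would exceed $d(N)-\ell \ge d(N)-k$, contradicting that $X$ is $k$-crashing (the path would certify duration $> d(N)-k$). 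Since each such edge set is a cut, it has capacity $\ge c^*$, i.e. $\sum_{i : x_i \ge \ell} c_i \ge c^*$. Summing over $\ell = 1, \ldots, k$ and swapping the order of summation gives $\sum_i c_i \cdot \min(x_i, k) \ge k c^*$, and since $\min(x_i,k) \le x_i$ we conclude $\cost(X) = \sum_i c_i x_i \ge k c^*$, as desired.

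The main obstacle I anticipate is the first step — rigorously establishing $c^* = \mathrm{mincut}(N_{\mathrm{crit}})$, i.e. that the optimum $1$-crashing cost equals a min cut on the critical subnetwork. One must check both directions: that any cut of $N_{\mathrm{crit}}$ yields a valid $1$-crashing plan of the same cost (shortening each cut edge by one day does shorten every critical path, and since the old critical length exceeds every non-critical path length by an integer $\ge 1$, the overall duration drops by exactly one), and conversely that any $1$-crashing plan's restriction to critical edges "dominates" a cut. A subtlety is that a $1$-crashing plan might also shorten non-critical edges or shorten a critical edge by a fractional-looking amount — but here the $x_i$ are nonnegative integers and we only need the \emph{existence} of a cut within its support on critical edges, which follows from the same path-length argument as above with $k=1$. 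Once this min-cut characterization and the "level set is a cut" claim are in hand, the summation is routine. I'd also need the small observation that the critical paths are exactly the $s$--$t$ paths of $N_{\mathrm{crit}}$, which is immediate from the definition of critical edge together with the DAG structure (every critical edge lies on a source-to-sink path of length $d(N)$, and concatenating the critical prefix and suffix through any critical edge stays critical).
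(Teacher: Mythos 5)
There is a genuine gap, and it sits exactly at the step you flagged as ``the cleaner claim.'' The assertion that $\{e_i : x_i \ge \ell\}$ is a cut of the critical subnetwork for every $\ell \le k$ is false for $\ell \ge 2$, because the reduction in a path's length under $X$ is the \emph{sum} $\sum_{e_i \in P} x_i$ over the edges of $P$, not the maximum. A critical path can be shortened by $k$ days using only edges with $x_i = 1$. Concretely, take $N$ to be the series network $s \to v \to t$ with two edges, each with $b_i - a_i = 1$; the only $2$-crashing plan is $x_1 = x_2 = 1$, and the level set $\{e_i : x_i \ge 2\}$ is empty, hence not a cut. Your subsequent chain $\sum_{i:x_i\ge\ell} c_i \ge c^*$ for each $\ell$, and therefore $\sum_i c_i \min(x_i,k) \ge k c^*$, collapses with it. (Your $\ell = 1$ case is correct --- it is Proposition~\ref{prop:cut} in the paper --- and your min-cut characterization of the optimum $1$-crashing cost is also correct and matches the paper's remark; but neither is where the difficulty lies.)

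This failure is precisely why the paper's proof is as involved as it is. Instead of level sets of $X$ in the fixed network $N^*$, the paper extracts $k$ pairwise-disjoint sub-multisets $C_1,\ldots,C_k$ of $X$ where each $C_i$ is a minimum cut, \emph{contained in the remaining part $X_i$ of $X$}, of the critical graph $N_i^*$ of a successively re-accelerated network $N_i = N_{i-1}^*(C_{i-1})$. Because these are cuts of different (shrinking) critical graphs, comparing their costs to $\cost(C_1) \ge \cost(A_1)$ is nontrivial: it requires the uncrossing argument of Lemma~\ref{lem:cost_i} (splitting $C_i$ and $C_{i+1}$ into the pieces $C^+, C^0, C^-, C^R$ and showing via Proposition~\ref{prop:contain cut} that two recombined edge sets are again cuts of $N_i^*$) to establish the monotonicity $\cost(C_i) \le \cost(C_{i+1})$. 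If you want to repair your argument, you would need to replace the level-set decomposition with some such adaptive extraction of disjoint cuts, at which point you are essentially reconstructing the paper's proof.
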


\begin{proof}[Proof of Theorem~\ref{thm:upper bound}]
	For convenience, let $N_i= N(A_1\cup \ldots \cup A_i)$,
	for $0\leq i\leq k$. Note that $N_0=N$ is the original project.
	
	Fix $i$ in $\{1,\ldots,k\}$ in the following.
	By the algorithm,
	$A_{i}$ is the optimum $1$-crashing plan of $N_{i-1}$.
	Using Lemma~\ref{lemma:k-cost}, we know (1)
	any $(k+1-i)$-crashing plan of $N_{i-1}$ costs at least $(k+1-i)\cdot \cost(A_i)$.
	
	Let $Y=A_1\cup \ldots \cup A_{i-1}$ and $X=\OPT \setminus Y$; hence $X\cup Y\supseteq \OPT$.
	
	Observe that $N(Y\cup X)$ saves $k$ days comparing to $N$,
	because $Y\cup X\supseteq \OPT$ is $k$-crashing,
	whereas $N(Y)=N_{i-1}$ saves $i-1$ days comparing to $N$.
	So, $N(Y\cup X)$ saves $k-(i-1)$ days comparing to $N(Y)$,
	which means (2) $X$ is a $(k-i+1)$-crashing plan of $N(Y)=N_{i-1}$.
	Combining (1) and (2), $\cost(X)\geq (k+1-i)\cost(A_i)$.
	
	Further since $\cost(X)\leq \cost(\OPT)$ (as $X\subseteq \OPT$),
	we obtain a relation $(k-i+1)\cost(A_i)\leq \cost(\OPT)$.	Therefore,
	\begin{equation*}		
		\cost(G)=\sum_{i=1}^{k}\cost(A_i) \leq \sum_{i=1}^{k}\frac{1}{k-i+1} \cost(\OPT).
	\end{equation*}	
\end{proof}

The \emph{critical graph} of network $H$, denoted by $H^*$, is formed by all the critical edges of $H$;
all the edges not critical are removed in $H^*$.

Before presenting the proof to the key lemma (Lemma~\ref{lemma:k-cost}),
we shall briefly explain how do we find the optimum $1$-crashing plan of some project $H$ (e.g., the accelerated project $N(G)$ in Algorithm~\ref{alg:greedy-crashing}).
First, compute the critical graph $H^*$ and define the capacity of $e_i$ in $H^*$ by $c_i$ if $e_i$ in $H$ can still be shortend (i.e., its length is more than $a_i$) and otherwise define the capacity of $e_i$ in $H^*$ to be $\infty$. 
Then we compute the minimum $s-t$ cut of $H^*$ (using the max-flow algorithm \cite{flow-survey}), and this cut gives the optimum $1$-crashing plan of $H$.

\subsection{proof (part I)}\label{sect:bound}

\begin{proposition}\label{prop:cut}
	A $k$-crashing plan $X$ of $N$ contains a cut of $N^*$.
\end{proposition}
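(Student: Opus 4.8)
The plan is to argue by contradiction. Suppose $X$ contains no cut of $N^*$; we will exhibit a source-to-sink path of $N^*$ that $X$ leaves entirely untouched, and this path will witness that $N(X)$ has duration at least $d(N)$, contradicting the hypothesis that $X$ is $k$-crashing with $k\ge 1$.

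First I would spell out the (standard) fact that a set of edges which contains no $s$-$t$ cut cannot disconnect $s$ from $t$. Let $S$ be the set of vertices reachable from $s$ in the graph obtained from $N^*$ by deleting every edge that appears with positive multiplicity in $X$, and put $T=V\setminus S$. If $t\in T$, then $(S,T)$ is a valid partition with $s\in S$, $t\in T$, and every edge of $N^*$ directed from $S$ to $T$ must lie in $X$ (otherwise it would enlarge the reachable set $S$); hence $X$ contains the cut of $N^*$ induced by $(S,T)$. So, under the assumption that $X$ contains no cut of $N^*$, we get $t\in S$: there is a path $P$ from $s$ to $t$ consisting only of critical edges of $N$, none of which is shortened by $X$.

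The main step is to show that such a $P$ is in fact a \emph{critical} path of $N$, i.e. has length exactly $d(N)$. This needs a little care, since a priori a critical edge is only known to lie on \emph{some} critical path, and it is not obvious that concatenating critical edges yields a critical path. I would prove it via longest-distance labels: let $f(v)$ be the length of the longest $s$-to-$v$ path and $g(v)$ the length of the longest $v$-to-$t$ path in $N$, so $f(s)=0$ and $f(t)=d(N)$. For any critical edge $e=(u,v)$ of length $\ell$ one has $f(u)+\ell+g(v)=d(N)$; combining this with the general inequalities $f(v)\ge f(u)+\ell$ and $f(v)+g(v)\le d(N)$ forces $f(v)=f(u)+\ell$. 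Summing this identity along the edges of $P$ telescopes to $f(t)=\mathrm{length}(P)$, so $\mathrm{length}(P)=d(N)$ and $P$ is critical.

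Finally, since $X$ shortens no edge of $P$, the path $P$ still has length $d(N)$ in the accelerated project $N(X)$, so the duration of $N(X)$ is at least $d(N)$; that is, $X$ shortens $N$ by $0$ days, contradicting that $X$ is $k$-crashing with $k\ge 1$. I expect the telescoping-labels argument — that every $s$-$t$ path inside $N^*$ is a critical path — to be the only non-routine ingredient; the passage from ``$X$ contains no cut'' to ``an untouched $s$-$t$ path exists'' and the concluding contradiction are both immediate.
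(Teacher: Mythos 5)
Your proof is correct and is essentially the paper's own argument run in contrapositive: both reduce the claim to the fact that every $s$--$t$ path of $N^*$ is a critical path of $N$, and both then use the standard reachability partition $(S,T)$ to extract from $X$ a cut of $N^*$. The only substantive difference is that you actually prove the ``every path of $N^*$ is critical'' step via the longest-distance labels $f,g$ and a telescoping identity, whereas the paper asserts it as immediate from the definition of $N^*$; your argument for that step is correct and fills a detail the paper glosses over.
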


\begin{proof}
	Because $X$ is $k$-crashing, each critical path of $N$ will be shortened in $N(X)$ and that means it contains an edge of $X$.
	Further since paths of $N^*$ are critical paths of $N$ (which simply follows from the definition of $N^*$), each path of $N^*$ contains an edge in $X$.
	
	As a consequence, after removing the edges in $X$ that belong to $N^*$, we disconnect source $s$ and sink $t$ in $N^*$.
	Now, let $S$ denote the vertices of $N^*$ that can still be reached from $s$ after the remove, and let $T$ denote the remaining part.
	Observe that all edges from $S$ to $T$ in $N^*$, which forms a cut of $N^*$, belong to $X$.
\end{proof}

\newcommand{\mincut}{\mathsf{mincut}}
When $X$ contains at least one cut of network $H$, let $\mincut(H,X)$ be the minimum cut of $H$ among all cuts of $H$ that belong to $X$.

Recall that $d(H)$ is the duration of network $H$.

\medskip In the following, suppose $X$ is a $k$-crashing plan of $N$. We introduce a decomposition of $X$
which is crucial to our proof.

First, define
\begin{equation}
	\left\{\begin{array}{ccc}
		N_1&=&N,\\
		X_1&=&X,\\
		C_1&=& \mincut(N_1^*,X_1).
	\end{array}\right.
\end{equation}

\begin{figure}[h]
	\centering	\includegraphics[width=\linewidth]{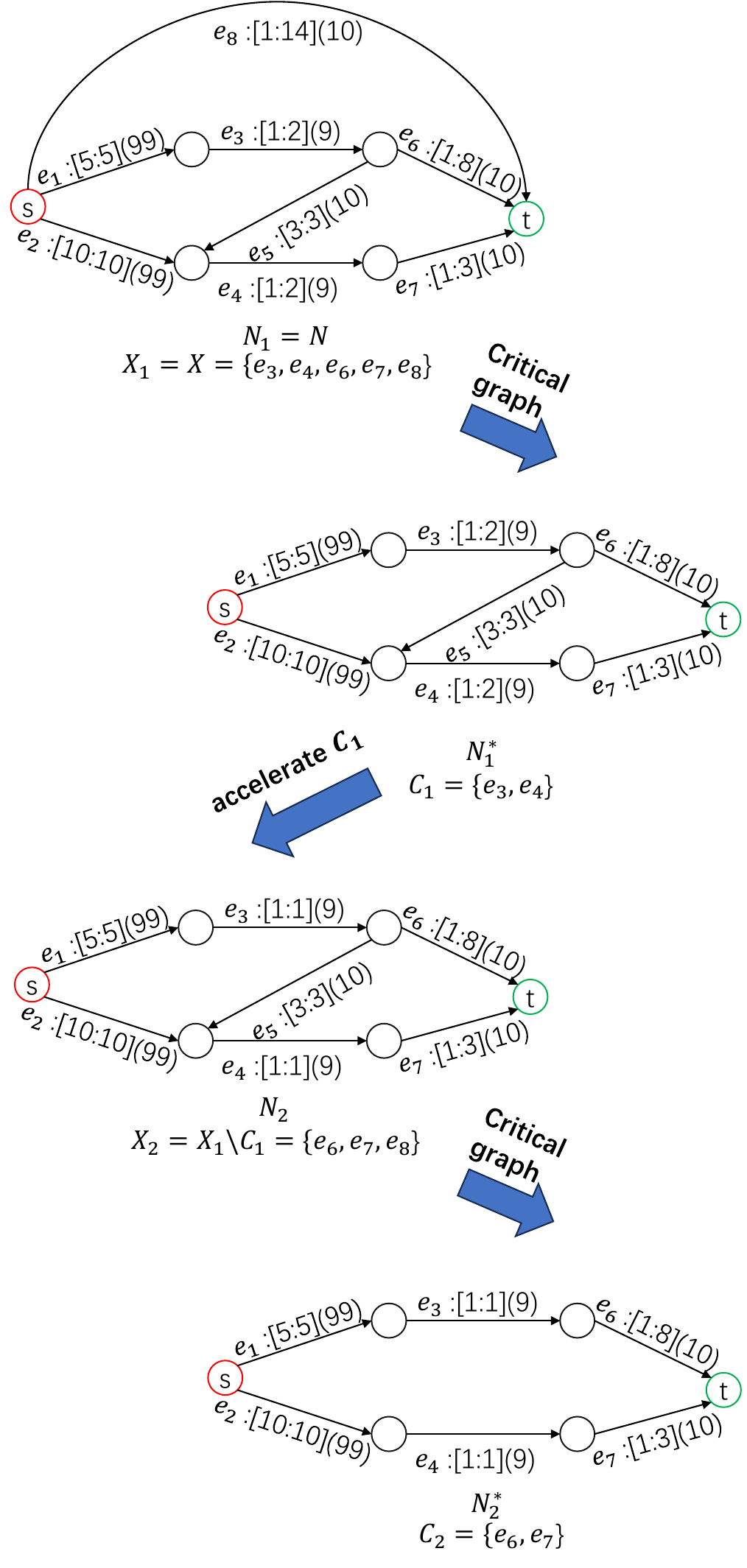}
	\caption{An example of the construction of $N_i,X_i,C_i$.} \label{fig-7}
\end{figure}

(Because $X_1=X$ is $k$-crashing, applying Proposition~\ref{prop:cut},
$X_1$ contains at least one cut of $N^*_1$. It follows that $C_1$ is well-defined.)

Next, for $1<i\leq k$, define
\begin{equation}\label{eqn:C}
	\left\{\begin{array}{ccc}
		N_{i}&=& N_{i-1}^*(C_{i-1}),\\
		X_{i}&=&X_{i-1}\setminus C_{i-1}.\\
		C_i &=& \mincut(N^*_i,X_i).
	\end{array}\right.
\end{equation}

See Figure~\ref{fig-7} for an example.

\begin{proposition}
	For $1< i\leq k$, it holds that
	
	1. $d(N_{i}) = d(N_{i-1}) - 1$ (namely, $d({N_i})=d(N)-i+1$).
	
	2. $X_i$ contains a cut of $N^*_i$ (and thus $C_i$ is well-defined).
\end{proposition}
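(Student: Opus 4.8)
The plan is to prove both parts at once by strong induction on $i$, strengthening the statement so that it actually propagates. The extra invariant I would carry is that $X_i$ over-accelerates $N_i$ in the obvious quantitative sense:
\[
  d\bigl(N_i(X_i)\bigr)\ \le\ d(N_i)-(k-i+1).
\]
For $i=1$ this is exactly the hypothesis that $X_1=X$ is $k$-crashing, so $d(N(X))=d(N)-k$. This invariant is precisely what keeps Part~2 alive: knowing only that $X_{i-1}$ contains a cut of $N_{i-1}^*$ is too weak to locate a cut inside the strictly smaller multiset $X_i=X_{i-1}\setminus C_{i-1}$, whereas the surplus $d(N_{i-1}(X_{i-1}))\le d(N_{i-1})-(k-i+2)$ leaves ``one more day'' of acceleration available after $C_{i-1}$ is removed. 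I would also fold into the induction the observation that whenever the invariant holds for an index $j$, it gives $d(N_j(X_j))\le d(N_j)-1$, so every critical path of $N_j$ meets $X_j$, and then the argument of Proposition~\ref{prop:cut} (delete the $X_j$-edges of $N_j^*$, split into vertices reachable and unreachable from $s$) exhibits a cut of $N_j^*$ contained in $X_j$; hence $C_j=\mincut(N_j^*,X_j)$ is well defined.

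For Part~1 I fix $1<i\le k$, assume $C_{i-1}$ is already known to be well defined, and use two facts: every $s$-$t$ path of $N_{i-1}^*$ has length exactly $d(N_{i-1})$ (paths of a critical graph are critical, as noted in the proof of Proposition~\ref{prop:cut}), and $C_{i-1}$ may be taken inclusion-minimal as a cut of $N_{i-1}^*$ (a proper sub-cut would still lie inside $X_{i-1}$ and cost strictly less, contradicting the choice of a minimum-cost such cut). The upper bound $d(N_i)\le d(N_{i-1})-1$ is then immediate: $C_{i-1}$ is a cut, so every $s$-$t$ path of $N_{i-1}^*$ uses at least one of its edges, and $N_i=N_{i-1}^*(C_{i-1})$ shortens each of those edges by one. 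For the matching lower bound I pick any edge $e\in C_{i-1}$; by inclusion-minimality, $C_{i-1}\setminus\{e\}$ is not a cut of $N_{i-1}^*$, so some $s$-$t$ path $P$ of $N_{i-1}^*$ avoids it, and since $C_{i-1}$ is a cut, $P$ uses $e$ and no other edge of $C_{i-1}$; thus in $N_i$ the path $P$ has length exactly $d(N_{i-1})-1$. This gives $d(N_i)=d(N_{i-1})-1$, and iterating yields $d(N_i)=d(N)-i+1$.

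For Part~2, and for advancing the invariant, the engine is the identity
\[
  N_i(X_i)=N_{i-1}^*\bigl(C_{i-1}\cup X_i\bigr)=N_{i-1}^*(X_{i-1}),
\]
valid because $C_{i-1}\cup X_i=X_{i-1}$ and applying a plan only decreases edge lengths. Since $N_{i-1}^*$ is a subgraph of $N_{i-1}$ with the same edge lengths, every $s$-$t$ path of $N_{i-1}^*(X_{i-1})$ is an $s$-$t$ path of $N_{i-1}(X_{i-1})$ of equal length, so $d(N_{i-1}^*(X_{i-1}))\le d(N_{i-1}(X_{i-1}))$. Combining this with the induction hypothesis for $i-1$ and with Part~1,
\[
  d\bigl(N_i(X_i)\bigr)\ \le\ d\bigl(N_{i-1}(X_{i-1})\bigr)\ \le\ d(N_{i-1})-(k-i+2)\ =\ d(N_i)-(k-i+1),
\]
which is the invariant for $i$; and since $k-i+1\ge 1$, it forces $d(N_i(X_i))\le d(N_i)-1$, so every critical path of $N_i$ contains an edge of $X_i$, and the Proposition~\ref{prop:cut}-style construction then produces a cut of $N_i^*$ lying in $X_i$, making $C_i$ well defined.

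I expect the main obstacle to be recognizing that the ``obvious'' induction hypothesis (``$X_i$ contains a cut of $N_i^*$'') does not close on its own, and that one must instead carry the quantitative slack $d(N_i(X_i))\le d(N_i)-(k-i+1)$; once that is spotted, each step is short. The only other places needing care are the direction of the $N^*$-versus-$N$ comparison -- which, conveniently, is needed only in the easy direction $d(H^*(Y))\le d(H(Y))$ since $H^*$ is a subgraph of $H$ -- and the remark that $\mincut$ should be understood as an inclusion-minimal minimum-cost cut (automatic when all slopes $c_i$ are positive), which is what drives the lower bound in Part~1.
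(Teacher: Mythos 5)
Your proof is correct. For Part 1 you argue essentially as the paper does: the bound $d(N_i)\le d(N_{i-1})-1$ follows because $C_{i-1}$ is a cut, and the matching lower bound follows from the minimality of $C_{i-1}$ --- the paper runs this as a contradiction (a drop of $2$ would let one edge of $C_{i-1}$ be cancelled while still leaving a cut inside $X_{i-1}$, contradicting minimality), while you exhibit the witness path directly; both versions rely on the same tacit assumption of positive slopes / inclusion-minimality of the chosen minimum cut, which you correctly flag. For Part 2 the paper does not set up an induction: it fixes a path $P$ of $N_i^*$ disjoint from $X_i$ and shows by a three-step accounting that its length in $N(X)$ equals its length in $N_i^*$, namely $d(N)-i+1>d(N)-k$, contradicting that $X$ is $k$-crashing. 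Your strengthened invariant $d(N_i(X_i))\le d(N_i)-(k-i+1)$, propagated through the identity $N_i(X_i)=N_{i-1}^*(X_{i-1})$ together with the subgraph inequality $d(H^*(Y))\le d(H(Y))$, is precisely the telescoped form of that accounting. What your packaging buys is an induction that closes on its own and makes explicit the ``remaining crashing power'' of $X_i$ that the paper leaves implicit; the cost is carrying a statement stronger than the proposition itself. Either route is sound, and neither has a gap the other avoids.
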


\begin{proof}
	1. Because $C_{i-1}$ is a cut of $N^*_{i-1}$,
	set $C_{i-1}$ is a 1-crashing plan of $N^*_{i-1}$, which means
	$d(N^*_{i-1}(C_{i-1})) \leq d(N^*_{i-1})-1$.
	
	Moreover, it cannot hold that
	$d(N^*_{i-1}(C_{i-1})) \leq d(N^*_{i-1})-2$.
	Otherwise, cancel one edge of $C_{i-1}$ and
	it still holds that
	$d(N^*_{i-1}(C_{i-1})) \leq d(N^*_{i-1})-1$,
	which means $C_{i-1}$ is at least $1$-crashing for $N^*_{i-1}$,
	and thus contains a cut of $N^*_{i-1}$ (by Proposition~\ref{prop:cut}).
	This contradicts the assumption that $C_{i-1}$ is the minimum cut.
	
	Therefore,  $d(N_i)=d(N^*_{i-1}(C_{i-1})) = d(N^*_{i-1})-1 = d(N_{i-1})-1$.
	
	\medskip 2.
	According to Proposition~\ref{prop:cut}, it is sufficient to prove that $X_i$ is a 1-crashing plan to $N_i$.
	
	Suppose to the opposite that $X_i$ is not 1-crashing to $N_i$.
	There exists a path $P$ in $N^*_i$ that is disjoint with $X_i$.
	Observe that
	
	(1) The length of $P$ in $N(X)$ is the original length of $P$ (in $N$) minus the number of edges in $X$ that fall in $P$.
	
	(2) The length of $P$ in $N^*_i$ is the original length of $P$ (in $N$) minus the number of edges in $(C_1 \cup \ldots \cup C_{i-1})$
	that fall in $P$.
	
	(3) The number of edges in $(C_1 \cup \ldots \cup C_{i-1})$ that fall in $P$
	equals the number of edges in $X$ that fall in $P$, because
	$X \setminus (C_1 \cup \ldots \cup C_{i-1})=X_i$ is disjoint with $P$.
	
	Together, the length of $P$ in $N(X)$ equals the length of $P$ in $N^*_i$,
	which equals $d(N^*_i)=d(N_i)= d(N)-i+1>d(N)-k$.
	This means $X$ is not a $k$-crashing plan of $P$, contradicting our assumption.
\end{proof}



The following lemma easily implies Lemma~\ref{lemma:k-cost}.

\begin{lemma}\label{lem:cost_i}
	$ \cost(C_i) \leq \cost(C_{i+1})$ for any $i~(1\leq i<k)$.
\end{lemma}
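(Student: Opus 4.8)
The claim is that the costs of the cuts $C_1, C_2, \ldots, C_k$ in the decomposition are non-decreasing. My plan is to prove $\cost(C_i) \leq \cost(C_{i+1})$ by exhibiting, inside $X_i$, a cut of $N_i^*$ whose cost is at most $\cost(C_{i+1})$; since $C_i = \mincut(N_i^*, X_i)$ is the \emph{cheapest} cut of $N_i^*$ contained in $X_i$, this immediately yields $\cost(C_i) \le \cost(C_{i+1})$. The natural candidate for such a cut is something built out of $C_{i+1}$ itself, possibly together with the ``carry-over'' edges that were removed at the previous step. Recall $N_{i+1} = N_i^*(C_i)$ and $X_{i+1} = X_i \setminus C_i$, and $C_{i+1}$ is a cut of $N_{i+1}^* = \big(N_i^*(C_i)\big)^*$ contained in $X_{i+1} \subseteq X_i$.

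\textbf{Key steps.}
First I would understand the relationship between the critical graph $N_{i+1}^*$ and $N_i^*$: since $N_{i+1}$ is obtained from $N_i^*$ by shortening exactly the cut edges $C_i$ (each by one day), and $d(N_{i+1}) = d(N_i^*) - 1$, a path of $N_i^*$ survives as a critical path in $N_{i+1}$ precisely when it contained exactly one edge of $C_i$ (paths with two or more $C_i$-edges get too short; paths with none stay too long — wait, there are none with zero since $C_i$ is a cut). So every critical path of $N_{i+1}$ is a critical path of $N_i^*$ meeting $C_i$ in exactly one edge. Next, $C_{i+1}$ is a cut of $N_{i+1}^*$: it hits every such path. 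I would then argue that $C_{i+1} \cup C_i$ (or a suitable sub-cut of it) is a cut of $N_i^*$: any $s$–$t$ path in $N_i^*$ either lies in $N_{i+1}^*$ — hence is hit by $C_{i+1}$ — or is a critical path of $N_i^*$ not surviving in $N_{i+1}^*$; but is such a path necessarily hit by $C_i$? A critical path of $N_i^*$ that does not survive shortening must contain $\ge 1$ edge of $C_i$ — and indeed since $C_i$ is a cut, every $s$–$t$ path of $N_i^*$ contains an edge of $C_i$. So $C_i$ alone is already a cut of $N_i^*$, which we knew. The real content must be a cost comparison, not merely ``is a cut.''

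\textbf{The main obstacle.}
The hard part is exactly the cost accounting: naively $C_i$ is itself a cut of $N_i^*$ inside $X_i$, so $\cost(C_i) \le \cost(C_i)$ is vacuous; I need to produce from $C_{i+1}$ a cut of $N_i^*$ that is \emph{no more expensive than $C_{i+1}$}. The plan is to take the edges of $C_{i+1}$, which are a cut separating $s$ from $t$ in $N_{i+1}^*$, and show that within $N_i^*$ the corresponding vertex partition $(S, T)$ — where $S$ is what $s$ reaches in $N_i^*$ after deleting $C_{i+1}$ — has \emph{all} its crossing edges already lying in $C_{i+1} \cup C_i$, and moreover that the ``extra'' $C_i$-edges crossing this particular partition can be charged against edges of $C_{i+1}$ via a flow/duality argument. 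Concretely, I expect to invoke max-flow/min-cut: since $C_i$ is the \emph{minimum} cut of $N_i^*$ (restricted to $X_i$, but by the structure of the capacity assignment it is a genuine min-cut of $N_i^*$ with capacities $c_j$), a max flow $f$ of value $\cost(C_i)$ saturates $C_i$; one then shows this flow, after removing one unit along some path, still routes value $\cost(C_i) - (\text{something})$ through $N_{i+1}^*$, forcing $\cost(C_{i+1}) \ge \cost(C_i)$ because $C_{i+1}$ must have capacity at least the value of any feasible flow in $N_{i+1}^*$. Getting the bookkeeping of ``how much flow survives the restriction to the critical graph $N_{i+1}^*$'' exactly right — i.e.\ that exactly $\cost(C_i) - (\text{unit routed through the one collapsed edge})$, or rather the full $\cost(C_i)$ minus at most the cost contribution that is no longer needed — is the delicate point, and I anticipate it will require carefully tracking which flow paths of $N_i^*$ remain paths of $N_{i+1}^*$ (namely those using exactly one $C_i$-edge, which is all of them after we note $C_i$ is a min-cut so flow uses each cut edge exactly to capacity along paths crossing it once).
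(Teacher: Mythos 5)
Your proposal reaches the right conclusion by a genuinely different route from the paper's. The paper argues purely combinatorially: it splits $C_i$ and $C_{i+1}$ according to the two vertex bipartitions $(U_i,W_i)$ and $(U_{i+1},W_{i+1})$, shows that the two hybrid sets $C_{i+1}^+\cup C_i^0\cup C_i^+$ and $C_{i+1}^-\cup C_i^0\cup C_i^-$ each contain a cut of $N_i^*$ lying inside $X_i$, applies the minimality of $C_i$ within $X_i$ to both, and adds the two inequalities (which in fact yields the slightly stronger $\cost(C_i)+\cost(C_i^R)\leq \cost(C_{i+1})$). You instead go through max-flow/min-cut duality, and that plan can be made to work, but two points you leave hedged must be pinned down. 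First, $C_i=\mincut(N_i^*,X_i)$ is a \emph{global} minimum cut only in the network whose capacities are $c_j$ on edges of $X_i$ and $\infty$ on every other edge of $N_i^*$ (not the shortenability-based capacities the paper uses to compute $A_1$); with that assignment strong duality produces a flow $f$ of value $\cost(C_i)$, finite because $X_i$ contains a cut of $N_i^*$. Second, the ``delicate bookkeeping'' you anticipate actually resolves with no loss at all: since $f$ is maximum and $(U_i,W_i)$ is a minimum cut, every edge of $RC_i$ carries zero flow, so every path in a path decomposition of $f$ crosses $C_i$ exactly once, is shortened by exactly one day, remains critical in $N_{i+1}$, and hence lies entirely in $N_{i+1}^*$; thus the entire flow of value $\cost(C_i)$ is feasible in $N_{i+1}^*$, and weak duality against the cut $C_{i+1}$ (all of whose edges lie in $X_{i+1}\subseteq X_i$ and so have finite capacities summing to $\cost(C_{i+1})$) gives $\cost(C_i)\leq\cost(C_{i+1})$ --- no ``minus something'' term is needed. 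Your opening framing (exhibiting a cut inside $X_i$ of cost at most $\cost(C_{i+1})$) describes the paper's strategy rather than what your flow argument delivers and can simply be dropped. The duality route is arguably shorter once these details are fixed; the paper's route avoids flow theory entirely and gives the extra $\cost(C_i^R)$ slack.
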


We show how to prove Lemma~\ref{lemma:k-cost} in the following.
The proof of Lemma~\ref{lem:cost_i} will be shown in the next subsection.

\begin{proof}[Proof of Lemma~\ref{lemma:k-cost}]
	Suppose $X$ is $k$-crashing to $N$.
	
	By Lemma~\ref{lem:cost_i}, we know $\cost(C_1) \leq \cost(C_i)~(1\leq i\leq k)$.
	
	Further since $\bigcup^k_{i=1}C_i \subseteq X$,
	\begin{equation*}
		k \cdot \cost(C_1) \leq \cost(\bigcup^k_{i=1}C_i)\leq \cost(X).
	\end{equation*}
	
	Because $C_1$ is the minimum cut of $N^*$ that is contained in $X$,
	whereas $A_1$ is the minimum cut of $N^*$ among all, $\cost(A_1) \leq \cost(C_1)$.
	To sum up, we have $k \cdot \cost(A_1) \leq \cost(X).$
\end{proof}

It is noteworthy to mention that $\bigcup^k_{i=1}C_i$ is not always equal to $X$
and $\bigcup^k_{i=1}C_i$ may not be $k$-crashing.

\subsection{proof (part II)}

Assume $i~(1\leq i<k)$ is fixed.
In the following we prove that $\cost(C_i)\leq \cost(C_{i+1})$, as stated in Lemma~\ref{lem:cost_i}.
Some additional notation shall be introduced here.
See Figure~\ref{fig-4} and Figure~\ref{fig-5}.

Assume the cut $C_i$ of $N^*_i$ divides the vertices of $N^*_i$ into two parts, $U_i,W_i$,
where $s\in U_i$ and $t\in W_i$. The edges of $N^*_i$ are divided into four parts:
1. $S_i$ -- the edges within $U_i$;
2. $T_i$ -- the edges within $W_i$;
3. $C_i$ -- the edges from $U_i$ to $W_i$;
4. $RC_i$ -- the edges from $W_i$ to $U_i$.

\begin{figure}[h]
	\includegraphics[width=0.65\linewidth]{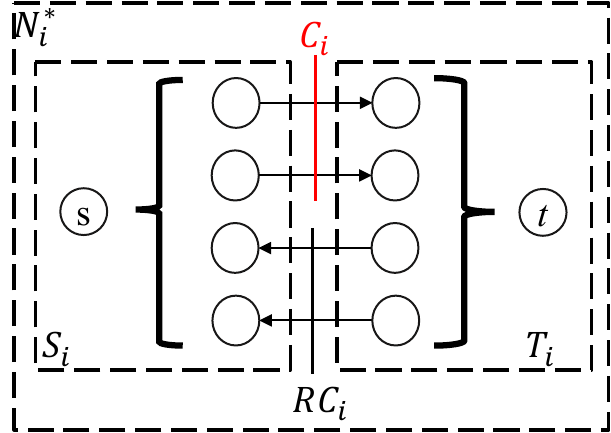}
	\centering
	\caption{Example of the definition of $S_i$, $T_i$, $RC_i$ and $C_i$}\label{fig-4}
\end{figure}

\begin{proposition}\label{prop:two-formulas}
	(1) $C_{i+1} \cap RC_i = \emptyset$ and (2) $C_i \subseteq N^*_{i+1}$.
\end{proposition}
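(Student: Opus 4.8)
The plan is to first describe exactly which edges of $N^*_i$ remain critical in $N_{i+1}=N^*_i(C_i)$, and then to read both statements off that description. Recall that $N_{i+1}$ has the same underlying graph as $N^*_i$, that $d(N_{i+1})=d(N_i)-1$ by the preceding proposition, and that every $s$-$t$ path of $N^*_i$ is a critical path of $N_i$ and hence has length $d(N_i)$. Applying the plan $C_i$ decreases the length of each edge of $C_i$ by one, so the length of an $s$-$t$ path $P$ of $N^*_i$, measured in $N_{i+1}$, equals $d(N_i)$ minus the number of edges of $P$ that lie in $C_i$; thus $P$ is a critical path of $N_{i+1}$ if and only if it uses exactly one edge of $C_i$. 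The combinatorial core will be a parity identity for the cut $(U_i,W_i)$: traversing $P$ from $s\in U_i$ to $t\in W_i$, each edge of $C_i$ is a step from $U_i$ to $W_i$ and each edge of $RC_i$ is a step from $W_i$ to $U_i$, so the number of $C_i$-edges of $P$ minus the number of $RC_i$-edges of $P$ equals $1$. Consequently an $s$-$t$ path of $N^*_i$ is critical in $N_{i+1}$ precisely when it uses one edge of $C_i$ and no edge of $RC_i$; in particular $E(N^*_{i+1})\cap RC_i=\emptyset$.

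Statement (1) then follows immediately: $C_{i+1}$ is by definition a cut of $N^*_{i+1}$, so $C_{i+1}\subseteq E(N^*_{i+1})$, which is disjoint from $RC_i$; hence $C_{i+1}\cap RC_i=\emptyset$. For statement (2), I would fix $e\in C_i$ and first reduce to the case that $C_i$ is inclusion-minimal (if some edge of $C_i$ can be removed while still leaving a cut, remove it; the resulting set is still contained in $X_i$ and its cost does not increase, so it is still a legitimate choice of $\mincut(N^*_i,X_i)$). Then $C_i\setminus\{e\}$ is not a cut of $N^*_i$, so there is an $s$-$t$ path $P$ of $N^*_i$ avoiding every edge of $C_i\setminus\{e\}$; since $C_i$ is a cut, $P$ must meet $C_i$, which forces $e\in P$ and forces $P$ to use exactly one edge of $C_i$. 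By the characterization above, $P$ is a critical path of $N_{i+1}$, whence $e\in E(N^*_{i+1})$. Since $e\in C_i$ was arbitrary, $C_i\subseteq N^*_{i+1}$.

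The main obstacle will be getting the first paragraph exactly right: identifying the critical paths of $N_{i+1}$ as the $s$-$t$ paths of $N^*_i$ that cross the cut $C_i$ only once, and establishing the parity identity relating $C_i$-crossings and $RC_i$-crossings. That identity is what simultaneously pushes the reverse-cut edges $RC_i$ out of $N^*_{i+1}$ (giving (1)) and keeps the forward-cut edges $C_i$ inside $N^*_{i+1}$ (giving (2)). Once it is in place, both statements are short; the only point requiring care in (2) is the harmless reduction to an inclusion-minimal $C_i$ (equivalently, a tie-breaking convention for $\mincut$, or the assumption that all costs $c_i$ are positive).
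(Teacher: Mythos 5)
Your proof is correct and follows essentially the same route as the paper's: both arguments rest on the observation that an $s$-$t$ path of $N^*_i$ survives as a critical path of $N_{i+1}$ exactly when it crosses $C_i$ once, use the forward/backward crossing parity to exclude $RC_i$ (giving (1)), and use minimality of $C_i$ to produce, for each $e\in C_i$, a surviving path through $e$ (giving (2)). Your version is in fact slightly more careful than the paper's, which silently conflates minimum-cost with inclusion-minimal; your explicit tie-breaking convention (or positivity of the costs $c_i$) closes that small gap.
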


\begin{proof}
	(1) Consider $N^*_{i}$. Any path involving $RC_i$ goes through $C_i$ at least twice.
	Such paths are shorten by $C_i$ by at least 2, and are thus excluded from $N_{i+1}$. However, $C_{i+1}$ are edges in $N^*_{i+1}$ and so are included in $N_{i+1}$. Together, $C_{i+1} \cap RC_i = \emptyset$.
	
	(2) Suppose there is an edge $e_i \in C_i$ and $e_i \notin N^*_{i+1}$. All paths in $N^*_i$ passing $e_i$ will be shortened by at least 2 after expediting $C_i$ to avoid becoming a critical path (which makes $e_i$ critical).
	If the shortening of $e_i$ is canceled, the paths can still be shortened by 1. So $C_i\setminus \{e_i\}$ still contains a cut to $N^*_i$, which violates the assumption that $C_i$ is the minimum cut, contradictory.
	So $C_i \subseteq N^*_{i+1}$.
\end{proof}

\begin{figure}[h]
	\centering	\includegraphics[width=\linewidth]{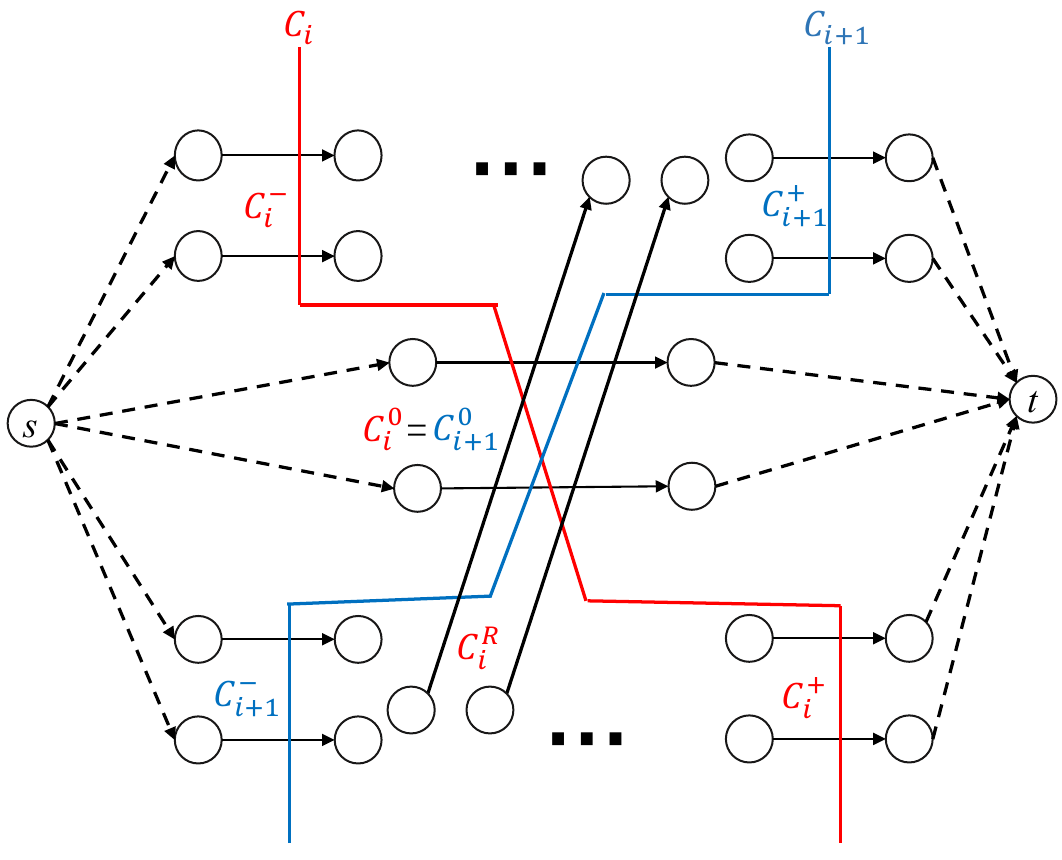}
	\caption{Key notation used in the proof of Lemma~\ref{lem:cost_i}.} \label{fig-5}
\end{figure}

Because $C_{i+1}$ is a subset of the edges of $N^*_{i+1}$,
and the edges of $N^*_{i+1}$ are also in $N^*_i$,
we see $C_{i+1} \subseteq T_i \cup S_i \cup C_i \cup RC_i$.
Further since $C_{i+1} \cap RC_i = \emptyset$ (proposition~\ref{prop:two-formulas}),
set $C_{i+1}$ consists of three disjoint parts:
\begin{equation}\label{eqn:Ci+1}
	\left\{
	\begin{split}
		C_{i+1}^+ &= C_{i+1}\cap T_i;\\
		C_{i+1}^0 &= C_{i+1}\cap C_i;\\
		C_{i+1}^- &= C_{i+1}\cap S_i.
	\end{split}
	\right.
\end{equation}

Due to $C_i \subseteq N^*_{i+1}$ (proposition~\ref{prop:two-formulas}), set $C_{i}$ consists of four disjoint parts:
\begin{equation}\label{eqn:Ci}
	\left\{
	\begin{split}
		C_i^+ &= C_i\cap T_{i+1}\\
		C_i^0 &= C_{i}\cap C_{i+1}\\
		C_i^- &= C_{i} \cap S_{i+1}\\
		C_i^R &= C_{i} \cap RC_{i+1}
	\end{split}
	\right.
\end{equation}

See Figure~\ref{fig-5} for an illustration. Note that $C_i^0=C_{i+1}^0$.

\begin{proposition}\label{prop:contain cut}~
	
	1. $C_{i+1}^+ \cup C_i^0 \cup C_i^+ $ contains a cut of $N^*_i$.
	
	2. $C_{i+1}^- \cup C_i^0 \cup C_i^- $ contains a cut of $N^*_i$.
\end{proposition}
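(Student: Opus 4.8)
The plan is to prove both parts by displaying, inside each of the two edge sets, an explicit cut of $N^*_i$. I would first normalize the data: take $C_i$ to be an \emph{inclusion-minimal} minimum-cost cut (deleting a redundant edge cannot raise the cost), and take its partition $(U_i,W_i)$ to be the tight one, so that $U_i$ is exactly the set of vertices of $N^*_i$ reachable from $s$ in $N^*_i\setminus C_i$. Assign to each vertex of $N^*_{i+1}$ a \emph{type} in $\{U_i,W_i\}\times\{U_{i+1},W_{i+1}\}$, recording the side it lies on for the two bipartitions (a vertex of $N^*_i$ that is not a vertex of $N^*_{i+1}$ gets only a first coordinate). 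For part~1 I claim that the cut of $N^*_i$ induced by the bipartition ``$W_i\cap W_{i+1}$ versus the rest'' lies inside $C_{i+1}^+\cup C_i^0\cup C_i^+$; since $s\notin W_i\cap W_{i+1}$ and $t\in W_i\cap W_{i+1}$ this really is a cut, so that proves part~1. Part~2 is identical with the bipartition ``$U_i\cap U_{i+1}$ versus the rest'', aimed at $C_{i+1}^-\cup C_i^0\cup C_i^-$.

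Next is the case analysis. Take an edge $g=(u,v)$ of the part-1 cut, so $v\in W_i\cap W_{i+1}$ and $u$ has some type other than $(W_i,W_{i+1})$. If $u\in U_i$ then $g\in C_i$, hence $g\in N^*_{i+1}$ by Proposition~\ref{prop:two-formulas}(2), and $g$ falls into $C_i\cap C_{i+1}=C_i^0$ or into $C_i\cap T_{i+1}=C_i^+$ according to the second coordinate of $u$. If $u$ has type $(W_i,U_{i+1})$ then $g$ lies inside $W_i$, i.e.\ $g\in T_i$, and (by the splicing lemma below) $g\in N^*_{i+1}$, so $g\in C_{i+1}\cap T_i=C_{i+1}^+$. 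The only leftover possibility is $u\in W_i$ with $u\notin V(N^*_{i+1})$, which has to be excluded. Part~2 runs the same way (with one case again invoking the splicing lemma), its three nontrivial cases landing in $C_i^0$, $C_i^-$, $C_{i+1}^-$, with the leftover possibility there being a head $v\in U_i$ with $v\notin V(N^*_{i+1})$.

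Two lemmas carry the weight, and I would prove them first (together with the elementary fact that every source-to-sink path of a critical graph is a critical path, which I would also state up front since it is used repeatedly). \emph{Splicing lemma:} if $g=(u,v)$ is an edge of $N^*_i$ with $u\in U_{i+1}$, $v\in W_{i+1}$ and $g\notin RC_i$, then $g\in N^*_{i+1}$, hence $g\in C_{i+1}$. Proof sketch: $u$ and $v$ are vertices of $N^*_{i+1}$, so each lies on a critical path of $N_{i+1}=N^*_i(C_i)$; read inside $N^*_i$ such a path is a source-sink path of the critical graph $N^*_i$, hence has length $d(N^*_i)$, hence crosses $C_i$ exactly once (as $d(N_{i+1})=d(N^*_i)-1$). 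Splice the $s$-to-$u$ prefix of $u$'s path, the edge $g$, and the $v$-to-$t$ suffix of $v$'s path: by acyclicity (and $u$ precedes $v$) this is again a simple $s$-$t$ path, and counting crossings — using where $u$ and $v$ sit relative to their single $C_i$-crossings, plus $g\notin RC_i$ — shows it still crosses $C_i$ exactly once, so it is a critical path of $N_{i+1}$ and every edge on it, $g$ included, lies in $N^*_{i+1}$. \emph{Reachability lemma:} minimality of $C_i$ gives, for each $e\in C_i$, a critical path of $N^*_i$ whose only $C_i$-edge is $e$, hence a $C_i$-free path $s\!\to\!\mathrm{tail}(e)$ and a $C_i$-free path $\mathrm{head}(e)\!\to\!t$; combining this with tightness, an induction on the topological order shows every vertex of $W_i$ has an $s$-path that crosses $C_i$ exactly once, so a vertex of $W_i$ lying outside $V(N^*_{i+1})$ has no $C_i$-free $t$-path. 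This settles the leftover case of part~1: there $v\in W_i\cap W_{i+1}$ sits after its unique $C_i$-crossing on a critical path of $N_{i+1}$, so it has a $C_i$-free $t$-path, and since $g\in T_i$ so does $u$ — forcing $u\in V(N^*_{i+1})$, a contradiction. The leftover case of part~2 dies symmetrically: the head $v$ would inherit a $C_i$-free $s$-path from its type-$(U_i,U_{i+1})$ predecessor, and routing through the first $C_i$-edge of any $t$-path out of $v$ (using the $C_i$-free data above) yields a one-crossing $s$-$t$ path through $v$, forcing $v\in V(N^*_{i+1})$.

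The genuine obstacle all along is that $N^*_{i+1}$ is a \emph{proper} subgraph of $N^*_i$: an edge of $N^*_i$ that ``crosses'' the $C_{i+1}$-bipartition need not survive into $N^*_{i+1}$, and an endpoint need not even be a vertex of $N^*_{i+1}$, so a priori such edges and vertices are not controlled by $C_{i+1}$ and the case analysis would collapse. Everything is forced to behave by the two ingredients above — choosing $C_i$ minimal with a tight partition, and the splicing lemma — and the delicate point in the latter is to keep the re-routed path simple while pinning its $C_i$-crossing count to exactly one; that count is also what decides which of $C_i^0,C_i^+,C_{i+1}^+$ (respectively $C_i^0,C_i^-,C_{i+1}^-$) actually contains the edge.
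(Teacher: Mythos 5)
Your proof is correct, but it is organized quite differently from the one in the paper. The paper argues in the ``primal'' direction: it shows that every $s$--$t$ path $P$ of $N^*_i$ meets $C_{i+1}^+\cup C_i^0\cup C_i^+$ (resp.\ the minus version) by locating the last (resp.\ first) edge $e_a$ of $P$ lying in $C_i^-\cup C_i^R$ (resp.\ $C_i^+\cup C_i^R$), splicing a $C_i$-free prefix onto $e_a$ and the tail $P^+$ of $P$ to obtain a path crossing $C_i$ exactly once, concluding that this spliced path survives into $N^*_{i+1}$, and hence that $P^+$ must cross $C_{i+1}$ inside $T_i$. You argue in the ``dual'' direction: you exhibit an explicit vertex bipartition ($W_i\cap W_{i+1}$ versus the rest, resp.\ $U_i\cap U_{i+1}$ versus the rest) and classify every edge of the induced cut into $C_i^0$, $C_i^+$ or $C_{i+1}^+$ (resp.\ $C_i^0$, $C_i^-$ or $C_{i+1}^-$). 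Both proofs ultimately rest on the same two facts --- inclusion-minimality of $C_i$ supplies $C_i$-free detours around any single cut edge, and an $s$--$t$ path of $N^*_i$ survives into $N^*_{i+1}$ exactly when it crosses $C_i$ once --- and your splicing lemma, with its crossing count forced to one by which side of $C_i$ each endpoint lies on, is a clean edge-level distillation of the paper's path-level splice. What your version buys is that the cut is produced explicitly rather than extracted at the end via a reachability argument, and that the awkward possibility of a vertex of $N^*_i$ disappearing from $N^*_{i+1}$ is confronted head-on rather than left implicit; the cost is a longer case analysis and two auxiliary lemmas. Two polish notes: the normalization of $C_i$ (inclusion-minimal among minimum-cost cuts in $X_i$, with the tight partition) should be stated once where $C_i$ is defined, since the paper's own minimality arguments tacitly need it as well; and your reachability lemma does not require induction on a topological order --- for $u\in W_i$ take any $s$--$t$ path of $N^*_i$ through $u$, let $e''$ be the last edge on its $s\to u$ prefix crossing between $U_i$ and $W_i$ (necessarily $e''\in C_i$), and replace everything before $\mathrm{tail}(e'')$ by the $C_i$-free path to $\mathrm{tail}(e'')$ guaranteed by minimality.
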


\begin{proof}
	To show that $C_{i+1}^+ \cup C_i^0 \cup C_i^+ $ contains a cut of $N^*_i$,
	it is sufficient to prove that any path $P$ in $N^*_i$ goes through $C_{i+1}^+ \cup C_i^0 \cup C_i^+ $.
	
	Assume that $P$ is disjoint with $C_i^+ \cup C_i^0$, otherwise it is trivial.
	We shall prove that $P$ goes through $C_{i+1}^+$.
	
	Clearly, $P$ goes through $C_i = C_i^+ \cup C_i^0 \cup C_i^- \cup C_i^R$, a cut of $N^*_i$.
	Therefore, $P$ goes through $C_i^-\cup C_i^R$. See Figure~\ref{fig-6}.
	
	Take the last edge in $C_i^-\cup C_i^R$ that $P$ goes through; denoted by $e_a$ with endpoint $a$. Denote the part of $P$ after $e_a$ by $P^+$ ($P^+\cap C_i = \emptyset$).
	
	We now claim that
	
	(1) $P^+ \subseteq T_i$.
	
	(2) In $N^*_i$, there must be a path $P^-\subseteq S_{i}$ from the source to $e_a$ without passing $C_i$.($P^-\cap C_i = \emptyset$)
	
	(3) $a \in U_{i+1}$.
	
	Since $P$ is disjoint with $C_i^+ \cup C_i^0$ and $P^+$ has gone through  $C_i^-\cup C_i^R$, we obtain (1).
	
	If (2) does not hold, then all paths in $N^*_i$ that pass through $e_a$ have pass through $C_i$ already. In this case, $C_i\setminus e_a$ also contains a cut of $N^*_i$, which contradicts our definition of minimum cut $C_i$. Thus we have (2).
	
	By the definition (\ref{eqn:Ci+1}), any edge of $C_i^-\cup C_i^R$ ends at a vertex of $U_{i+1}$. Since $a$ is the endpoint of $e_a \in C_i^-\cup C_i^R$, we have (3).

	By (2), we can obtain a $P^-$ from $s$ to $e_a$. Concatenate $P^-,e_a,P^+$, we obtain a path $P'$ in $N^*_i$.
	Since $(P^-\cup P^+) \cap C_i = \emptyset$ and $e_a \in C_i$, $P'$ only goes through  $C_i^- \cup C_i^R$ once. So $P'$ is only shortened by 1 and still critical after expediting $C_i$.
	Therefore, $P'$ exists in $N^*_{i+1}$ .
	
	According to (3), we known that $P^+ \in P'$ starts with $a\in U_{i+1}$ and ends at the sink in $W_{i+1}$. Thus it must go though the cut $C_{i+1}$.
	
	According to (1) and definition (\ref{eqn:Ci+1}),
	path $P^+$ (which is a subset of $T_i$ due to (1)) can only go through $C_{i+1}^+ \subset T_{i}$.
	
	Since $P^+ \subset P$, we have $P$ goes through $C_{i+1}^+$. So any path $P$ in $N^*_i$ goes through $C_{i+1}^+ \cup C_i^0 \cup C_i^+ $.
	
	Therefore, $C_{i+1}^+ \cup C_i^0 \cup C_i^+ $ contains a cut of $N^*_i$.
	
	\begin{figure}[h]
		\includegraphics[width=\linewidth]{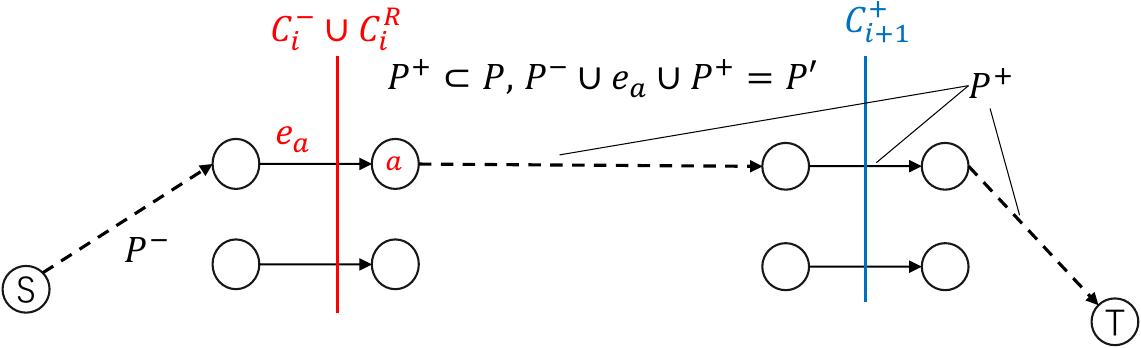}
		\centering
		\caption{Construction of $P'$ in the proof of Proposition~\ref{prop:contain cut}~part 1.} \label{fig-6}
	\end{figure}
	
	
	\medskip To show that $C_{i+1}^- \cup C_i^0 \cup C_i^- $ contains a cut of $N^*_i$,
	it is sufficient to prove that any path $P$ in $N^*_i$ goes through $C_{i+1}^- \cup C_i^0 \cup C_i^- $.
	
	Assume that $P$ is disjoint with $C_i^- \cup C_i^0$, otherwise it is trivial.
	We shall prove that $P$ goes through $C_{i+1}^-$.
	
	Clearly, $P$ goes through $C_i = C_i^+ \cup C_i^0 \cup C_i^- \cup C_i^R$, a cut of $N^*_i$.
	Therefore, $P$ goes through $C_i^+\cup C_i^R$. See Figure~\ref{fig-3}.
	
	Take the first edge in $C_i^+\cup C_i^R$ that $P$ goes through; denoted by $e_b$ with start point $b$. Denote the part of $P$ before $e_b$ by $P^-$ ($P^-\cap C_i = \emptyset$).
	
	We now claim that
	
	(1) $P^- \subseteq S_i$.
	
	(2) In $N^*_i$, there must be a path $P^+\subseteq T_{i}$ from $e_b$ to the sink without passing $C_i$.($P^+\cap C_i = \emptyset$)
	
	(3) $b \in W_{i+1}$.
	
	Since $P$ is disjoint with $C_i^- \cup C_i^0$ and $P^-$ hasn't reach $C_i^+\cup C_i^R$, we obtain (1).
	
	If (2) does not hold, then all paths in $N^*_i$ that pass through $e_b$ will pass through $C_i$ anyway. In this case, $C_i\setminus e_b$ also contains cut of $N^*_i$, which contradicts our definition of minimum cut $C_i$. Thus we have (2).
	
	By the definition (\ref{eqn:Ci+1}), any edge of $C_i^+\cup C_i^R$ starts with a vertex of $W_{i+1}$. Since $b$ is the start of $e_b \in C_i^+\cup C_i^R$, we have (3).

	By (2), we can obtain a $P^+$ from $e_b$ to the sink. Concatenate $P^-,e_b,P^+$, we obtain a path $P'$ in $N^*_i$.
	Since $(P^-\cup P^+) \cap C_i = \emptyset$ and $e_b \in C_i$, $P'$ only goes through  $C_i^+ \cup C_i^R$ once. So $P'$ is only shortened by 1 and still critical after expediting $C_i$.
	Therefore, $P'$ exists in $N^*_{i+1}$ .
	
	According to (3), we known that $P^- \in P'$ starts from the source and ends at $b\in W_{i+1}$. Thus it must go though the cut $C_{i+1}$.
	
	According to (1) and definition (\ref{eqn:Ci+1}),
	path $P^-$ (which is a subset of $S_i$ due to (1)) can only go through $C_{i+1}^- \subset S_{i}$.
	
	Since $P^- \subset P$, we have $P$ goes through $C_{i+1}^-$. So any path $P$ in $N^*_i$ goes through $C_{i+1}^- \cup C_i^0 \cup C_i^- $.
	
	Therefore, $C_{i+1}^- \cup C_i^0 \cup C_i^- $ contains a cut of $N^*_i$.
	
\end{proof}

\begin{figure}[h]
	\includegraphics[width=\linewidth]{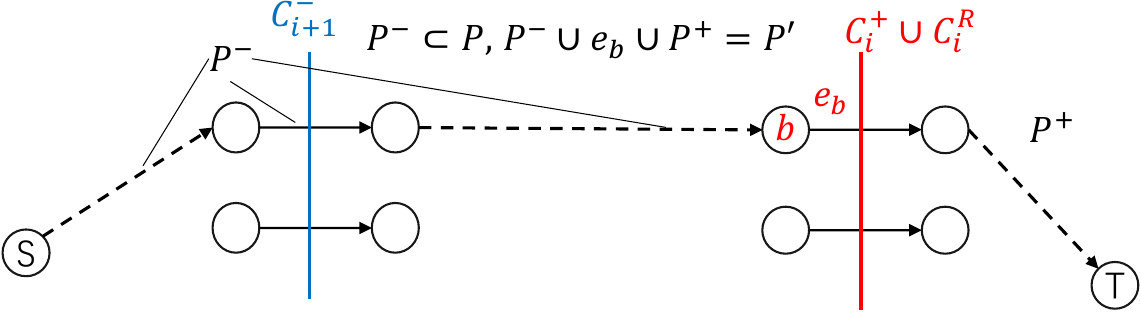}
	\centering
	\caption{Construction of $P'$ in the proof of Proposition~\ref{prop:contain cut}~part 2.} \label{fig-3}
\end{figure}

We are ready to prove Lemma~\ref{lem:cost_i}.

\begin{proof}[Proof of Lemma~\ref{lem:cost_i}]
	According to proposition~\ref{prop:contain cut},
	$C_{i+1}^+ \cup C_i^0 \cup C_i^+ $ and $C_i^- \cup C_i^0 \cup C_{i+1}^- $ each contains a cut of $N^*_i$.
	Notice that  $((C_{i+1}^+ \cup C_i^0 \cup C_i^+)\cup(C_i^- \cup C_i^0 \cup C_{i+1}^-)\cup C_i^R) = (C_i \cup C_{i+1}) \subset X_i$, so the mentioned two cuts are in $X_i$.
	Further since $C_i$ is the minimum cut of $N^*_i$ in $X_i$. We obtain
	\begin{equation*}
		\begin{split}
			\cost(C_i) &= \cost(C_i^+ \cup C_i^0 \cup C_i^- \cup C_i^R)\leq \cost(C_{i+1}^+ \cup C_i^0 \cup C_i^+)\\
			\cost(C_i) &=\cost(C_i^+ \cup C_i^0 \cup C_i^- \cup C_i^R) \leq \cost(C_{i+1}^- \cup C_i^0 \cup C_i^-)
		\end{split}
	\end{equation*}
	
	By adding the inequalities above (and note that $C_{i+1}^0 = C_i^0 = C_i \cap C_{i+1}$), we have
	\begin{equation*}
		\begin{split}
			2\cost(C_i^+ \cup C_i^0 \cup C_i^-\cup C_i^R) \\
			\leq \cost(C_{i+1}^+ \cup C_i^0 \cup C_i^+) + \cost(C_i^- \cup C_{i+1}^0 \cup C_{i+1}^-)
		\end{split}
	\end{equation*}
	Equivalently,
	\begin{equation*}
		\begin{array}{cc}
			& 2\cost(C_i^+ \cup C_i^0 \cup C_i^-) + 2\cost(C_i^R)\\
			\leq & \cost(C_i^- \cup C_i^0 \cup C_i^+ \cup C_{i+1}^+ \cup C_{i+1}^0 \cup C_{i+1}^-).
		\end{array}
	\end{equation*}
	
	By removing one piece of $\cost(C_i^- \cup C_i^0 \cup C_i^+)$ from both side,
	\begin{equation*}
		\cost(C_i) + \cost(C_i^R) \leq \cost(C_{i+1}^+ \cup C_{i+1}^0 \cup C_{i+1}^-)=\cost(C_{i+1})
	\end{equation*}
	
	Therefore, $\cost(C_i)\leq \cost(C_{i+1})$.
\end{proof}

Recall the ``convex case'' mentioned in subsection~\ref{subsect:formal-description},
in which shortening an edge becomes more and more expensive.
We claim that Algorithm~\ref{alg:greedy-crashing}
finds a solution with the approximation ratio $\frac{1}{1}+\ldots+\frac{1}{k}$ even for this convex case.
Essentially, the proof does not need any change. 
We leave the readers to verify this by themselves. 
(Hint: the value of the crashing cost is only applied in Lemma~\ref{lem:cost_i}, so all the other lemmas or propositions remain correct.)

\subsection{Counter-example of Algorithm~\ref{alg:greedy-crashing}}\label{subsect:counter-example}

The greedy algorithm given in Algorithm~\ref{alg:greedy-crashing} does not always find an optimal $k$-crashing plan.
Here we show an example.

\begin{figure}[h]
	\includegraphics[width=\linewidth]{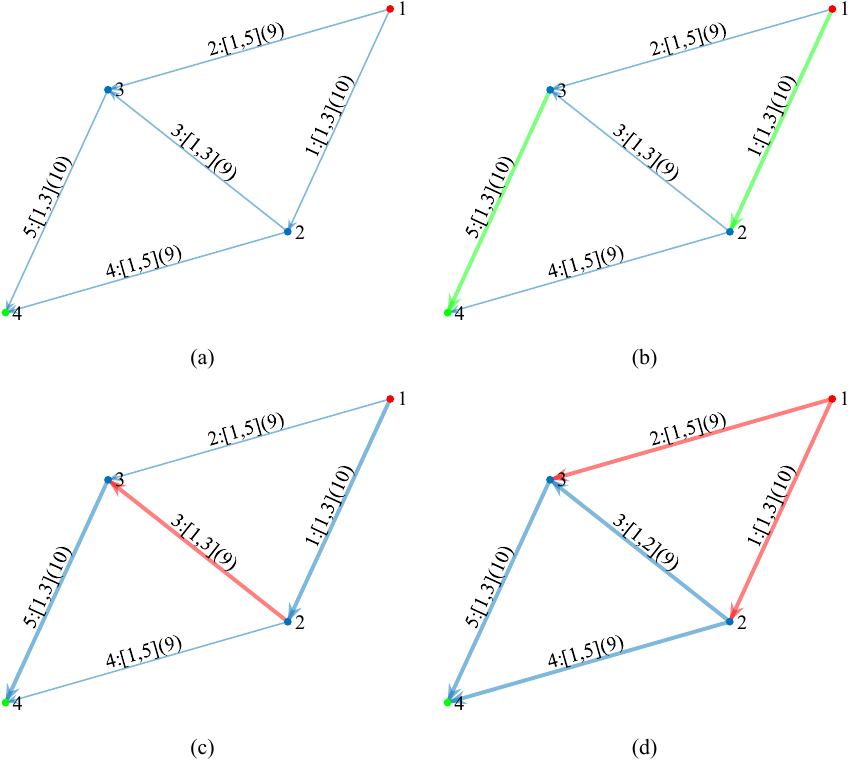}
	\centering
	\caption{An example with 5 jobs.
		The parameters $a_i,b_i,c_i$ of job $j_i$ are shown as a label $i:[a_i, b_i](c_i)$ in the graph. The source $s$ equals 1, whereas the sink $t$ equals 4.} \label{fig-2}
\end{figure}

The network is as shown in Figure~\ref{fig-2}~(a), and we consider $k=2$.
The critical path of this network has length $9$.
The unique critical path consists of jobs $j_1$, $j_3$, and $j_5$.

The greedy algorithm expedites job $j_3$ for one day in the first iteration; see Figure~\ref{fig-2}~(c).
It further expedites jobs $j_1$ and $j_2$ for one day in the second iteration; see Figure~\ref{fig-2}~(d). The total cost is $9+9+10=28$.

The optimal $k$-crashing plan is to expedite jobs $j_1$ and $j_5$, as shown in Figure~\ref{fig-2}~(b), which costs $10+10=20$.

\section{The $k$-LIS problem}\label{sec:k-LIS}

Recall that the $k$-LIS problem aims to find $k$ disjoint increasing subsequences among a given sequence $(a_1,a_2,\ldots,a_n)$,  so that the total length of the $k$ subsequences is maximized.
The following greedy algorithm can be used to find a solution for the problem.

\begin{algorithm}[hbpt]
	\caption{Greedy algorithm for $k$-LIS problem}\label{alg:greedy}
	\textbf{Input:} A sequence of number $\omega$.\\
	\textbf{Output:} $k$ increasing subsequences of $\omega$.
	
	\begin{algorithmic}
		\For{$i =1$ to $k$}
		\State Find the longest increasing subsequence $A_i$ of $\omega$;
		\State Output $A_i$ and remove $A_i$ from $\omega$;
		\EndFor
	\end{algorithmic}
\end{algorithm}

This algorithm does not guarantee an optimal result for $k>1$. For example, consider $ \omega = (3,4,5,8,9,1,6,7,8,9), k = 2$. We have the optimal result as $\{(3,4,5,8,9),(1,6,7,8,9)\}$ with the total length of $10$.
Yet the greedy algorithm finds $\{(3,4,5,6,7,8,9),(8,9)\}$ with a total length of $9$, which is not optimal.

\subsection{A lower Bound of the Greedy Algorithm }

We show that Algorithm~\ref{alg:greedy}
is an $(1-\frac{1}{e})$-approximation algorithm in the following.
Denote by $|\alpha|$ the length of any sequence $\alpha$.
Moreover, for any set $\mathcal{S}$ of (nonoverlapping) subsequences, let $|\mathcal{S}|$ denote the total length of subsequences in $\mathcal{S}$.

Let $L_1,\ldots,L_k$ denote the nonoverlapping subsequences given by the optimal solution of the $k$-LIS problem. Assume that $|L_1|\geq \ldots |L_k|$ without loss of generality.
Let $A_1,\ldots,A_k$ denote the nonoverlapping subsequences found by the greedy algorithm (Algorithm~\ref{alg:greedy}).
To be clear, $A_i$ refers to the subsequence found at the $i$-th step.

For any $x$ in $\{0,\ldots,k\}$, define $r_x(L_i) = L_i\setminus \{A_1\cup\ldots\cup A_x\}$,
which is the remaining part of $L_i$ after removing those elements selected in the first $x$ rounds of the greedy algorithm.

\medskip We start with the case $k = 2$, and we first bound $\sum_{i=1}^{k} |r_1(L_i)|$.

\begin{equation}\label{eqn:bound_r}
	|r_1(L_1)| + |r_1(L_2)| \geq |L_1| + |L_2| - |A_1|
\end{equation}

\begin{proof}[Proof of (\ref{eqn:bound_r})]
	Following the definition of $r_1(L_1)$ and $r_1(L_2)$, we have
	$$L_1 \cup L_2 \subseteq r_1(L_1) \cup r_1(L_2) \cup A_1,$$
	and thus
	$$|L_1 \cup L_2| \leq |r_1(L_1) \cup r_1(L_2) \cup A_1|.$$
	
	Note that the left side equals $|L_1|+|L_2|$,
	whereas the right side equals $|r_1(L_1)| + |r_1(L_2)|+|A_1|$,
	because of the disjointness of $L_1,L_2$ and that of $r_1(L_1),r_1(L_2),A_1$.
	Therefore, we obtain (\ref{eqn:bound_r}).
\end{proof}

Next, we point out two relations between $|A_i|$ and $|L_i|$.
\begin{eqnarray}
	2|A_1| & \geq & |L_1|+|L_2| \label{eqn:relation_A_L-1}\\
	2|A_2| + |A_1| & \geq & |L_1| + |L_2| \label{eqn:relation_A_L-2}
\end{eqnarray}

Summing up (\ref{eqn:relation_A_L-1}) divided by 4 with (\ref{eqn:relation_A_L-2}) divided by 2,
$|A_1| + |A_2| \geq \frac{3}{4}(|L_1|+|L_2|)$. So, Algorithm~\ref{alg:greedy} is a $\frac{3}{4}$-approximation for $k=2$.

\begin{proof}[Proof of (\ref{eqn:relation_A_L-1})]
	$2|A_1| = |A_1|+|A_1| \geq |L_1|+|L_2|$.
\end{proof}

\begin{proof}[Proof of (\ref{eqn:relation_A_L-2})]
	At the moment where the greedy algorithm is about to select $A_2$ (after chosen $A_1$),
	sequence $r_1(L_1)$ and $r_1(L_2)$ are both available for being selected by the greedy algorithm, hence
	\begin{equation}\label{eqn:relation-a-r}
		|A_2|\geq|r_1(L_1)| \text{ and }|A_2|\geq|r_1(L_2)|.
	\end{equation}
	
	Summing them up, $2|A_2| \geq |r_1(L_1)| + |r_1(L_2)|$.
	Combining this with (\ref{eqn:bound_r}), we have
	$2|A_2|	 \geq |L_1| + |L_2| - |A_1|$, i.e., (\ref{eqn:relation_A_L-2}) holds.
\end{proof}


We extend the above process to the case $k>2$ in the next.
Denote $\OPT=\{L_1,\ldots,L_k\}$,
and $g_x=|A_1|+\ldots+|A_x|$ for $x\leq k$.

\begin{lemma}\label{lemma:Lower} For each $x~(1\leq x\leq k)$,
	\begin{eqnarray}
		\sum_{i=1}^{k} |r_x(L_i)|&\geq& |\OPT| - g_x, \label{eqn:bound_r_extended}\\
		k|A_x| + g_{x-1} &\geq& |\OPT|  \label{eqn:relation_A_L_extended}
	\end{eqnarray}	
\end{lemma}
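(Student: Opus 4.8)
The plan is to generalize the two inequalities that were already established for $k=2$. I would prove both statements simultaneously, though they are essentially independent once the bookkeeping is set up. The first inequality~(\ref{eqn:bound_r_extended}) is the ``covering'' estimate and should follow by exactly the same set-theoretic argument as~(\ref{eqn:bound_r}): every element of $L_i$ either survives into $r_x(L_i)$ or was removed by one of $A_1,\ldots,A_x$, so $\bigcup_i L_i \subseteq \bigl(\bigcup_i r_x(L_i)\bigr) \cup A_1 \cup \ldots \cup A_x$. Taking total lengths and using that the $L_i$ are pairwise disjoint (left side) and that the $r_x(L_i)$ together with the $A_j$ are pairwise disjoint (right side — the $r_x(L_i)$ are disjoint since they are subsets of disjoint sets, and they are by construction disjoint from every $A_j$ with $j\le x$) gives $|\OPT| = \sum_i |L_i| \le \sum_i |r_x(L_i)| + g_x$, which is~(\ref{eqn:bound_r_extended}).

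For the second inequality~(\ref{eqn:relation_A_L_extended}), I would argue as in the proof of~(\ref{eqn:relation_A_L-2}). At the moment the greedy algorithm is about to pick $A_x$, it has already removed $A_1,\ldots,A_{x-1}$, so each sequence $r_{x-1}(L_i)$ is still entirely present in $\omega$ and is a legal increasing subsequence; since $A_x$ is the \emph{longest} increasing subsequence at that stage, $|A_x| \ge |r_{x-1}(L_i)|$ for every $i \in \{1,\ldots,k\}$. Summing over $i$ yields $k|A_x| \ge \sum_{i=1}^{k} |r_{x-1}(L_i)|$. Now apply~(\ref{eqn:bound_r_extended}) with $x$ replaced by $x-1$ (valid for $x \ge 2$; for $x=1$ the sum $\sum_i |r_0(L_i)| = |\OPT|$ and $g_0 = 0$, so the bound holds trivially) to get $\sum_{i=1}^{k} |r_{x-1}(L_i)| \ge |\OPT| - g_{x-1}$, and combining the two gives $k|A_x| \ge |\OPT| - g_{x-1}$, i.e.~(\ref{eqn:relation_A_L_extended}).

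The only subtlety — and the step I would be most careful about — is the disjointness bookkeeping in~(\ref{eqn:bound_r_extended}): one must check that the right-hand multiset $\bigcup_i r_x(L_i) \cup A_1 \cup \ldots \cup A_x$ has total length exactly $\sum_i |r_x(L_i)| + g_x$, which requires that the $r_x(L_i)$ are mutually disjoint \emph{and} disjoint from all of $A_1,\ldots,A_x$. Disjointness from the $A_j$ is immediate from the definition $r_x(L_i) = L_i \setminus (A_1 \cup \ldots \cup A_x)$; mutual disjointness of the $r_x(L_i)$ follows from that of $L_1,\ldots,L_k$. The $A_j$ themselves are pairwise disjoint because the greedy algorithm removes each $A_j$ from $\omega$ before selecting the next. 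Once these observations are in place, the inductive structure — (\ref{eqn:bound_r_extended}) at level $x-1$ feeding into (\ref{eqn:relation_A_L_extended}) at level $x$ — closes cleanly, and no genuinely new idea beyond the $k=2$ case is needed.
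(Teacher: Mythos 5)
Your proposal is correct and follows essentially the same route as the paper's proof: the covering/disjointness argument for~(\ref{eqn:bound_r_extended}) and the combination of $|A_x|\geq |r_{x-1}(L_i)|$ with~(\ref{eqn:bound_r_extended}) at level $x-1$ for~(\ref{eqn:relation_A_L_extended}). The extra care you take with the disjointness bookkeeping and the $x=1$ edge case is sound but adds nothing beyond what the paper's argument already implicitly uses.
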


Be aware that (\ref{eqn:bound_r_extended}) extends (\ref{eqn:bound_r}), whereas
(\ref{eqn:relation_A_L_extended}) extends (\ref{eqn:relation_A_L-1}) and (\ref{eqn:relation_A_L-2}).

\begin{proof}[Proof of (\ref{eqn:bound_r_extended})]
	Denote $B=\bigcup_{j=1}^x A_j$ for short in this proof.
	Recall that $L_i\setminus B$ is defined to be $r_x(L_i)$.
	Clearly, $\OPT \setminus B= (L_1\cup \ldots \cup L_k) \setminus B= r_x(L_1)\cup \ldots \cup r_x(L_k).$	
	As a corollary, $$\OPT \subseteq (\OPT \setminus B) \cup B=r_x(L_1)\cup \ldots \cup r_x(L_k) \cup B.$$
	It follows that	 $|\OPT| \leq \sum_{i=1}^{k} |r_x(L_i)| + g_x$. Hence (\ref{eqn:bound_r_extended}) holds.
\end{proof}

\begin{proof}[Proof of (\ref{eqn:relation_A_L_extended})]
	Similar to (\ref{eqn:relation-a-r}), we have
	\begin{equation}\label{eqn:relation-a-r-extended}
		|A_{x}|\geq |r_{x-1}(L_i)| ~(\text{for each }i\in\{1,\ldots,k\}).
	\end{equation}
	
	Applying (\ref{eqn:bound_r_extended}),
	$g_{x-1} \geq |\OPT| - \sum_{i=1}^{k} |r_{x-1}(L_i)|.$
	
	Together, $g_{x-1} \geq |\OPT| - k |A_x|$; namely, (\ref{eqn:relation_A_L_extended}) holds.
\end{proof}

\begin{theorem}
	For any $k\geq 1$, it holds that
	\begin{equation}\label{eqn:ratio}
		\frac{g_k}{|\OPT|}\geq 1-(\frac{k-1}{k})^k>1-\frac{1}{e}.
	\end{equation}
\end{theorem}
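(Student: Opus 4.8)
The plan is to combine the two inequalities from Lemma~\ref{lemma:Lower} into a recurrence on the partial sums $g_x$ and then unroll it. The key observation is that \eqref{eqn:relation_A_L_extended} rewrites as $|A_x| \geq \frac{1}{k}\bigl(|\OPT| - g_{x-1}\bigr)$, and since $g_x = g_{x-1} + |A_x|$, this immediately gives
\begin{equation*}
	g_x \geq g_{x-1} + \frac{1}{k}\bigl(|\OPT| - g_{x-1}\bigr) = \frac{k-1}{k}\,g_{x-1} + \frac{1}{k}|\OPT|.
\end{equation*}
Equivalently, setting $h_x = |\OPT| - g_x$ (the "uncovered" amount), the recurrence becomes $h_x \leq \frac{k-1}{k}\,h_{x-1}$, with $h_0 = |\OPT|$ (since $g_0 = 0$).

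First I would establish the recurrence above cleanly from Lemma~\ref{lemma:Lower}, taking care that \eqref{eqn:relation_A_L_extended} is valid for every $x$ in $\{1,\ldots,k\}$ (for $x=1$ the term $g_0$ is just $0$, so there is nothing special to check). Then I would unroll: $h_k \leq \bigl(\frac{k-1}{k}\bigr)^k h_0 = \bigl(\frac{k-1}{k}\bigr)^k |\OPT|$, hence
\begin{equation*}
	g_k = |\OPT| - h_k \geq \Bigl(1 - \bigl(\tfrac{k-1}{k}\bigr)^k\Bigr)|\OPT|,
\end{equation*}
which is the first inequality in \eqref{eqn:ratio}. For the second, I would invoke the standard fact that $\bigl(1 - \tfrac{1}{k}\bigr)^k$ is an increasing sequence in $k$ converging to $\tfrac{1}{e}$ from below (equivalently, $\bigl(\tfrac{k-1}{k}\bigr)^k < \tfrac{1}{e}$ for all $k \geq 1$), which can be justified by $\ln(1-1/k) < -1/k$, so $k\ln(1-1/k) < -1$. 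This yields $1 - \bigl(\tfrac{k-1}{k}\bigr)^k > 1 - \tfrac{1}{e}$, completing the chain.

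The argument is essentially a one-line recurrence plus an unrolling, so there is no serious obstacle; the only points requiring a little care are (i) making sure the index ranges line up — that \eqref{eqn:relation_A_L_extended} is being applied for $x = 1, 2, \ldots, k$ and that $g_0 = 0$ seeds the recursion correctly — and (ii) stating the elementary inequality $\bigl(\tfrac{k-1}{k}\bigr)^k < \tfrac1e$ with enough justification that it reads as a proof rather than folklore. One subtlety worth a sentence: the monotonicity of $h_x \mapsto \frac{k-1}{k}h_x$ means the recurrence composes without sign issues, and all $h_x \geq 0$ automatically since $g_x \leq |\OPT|$ (the greedy subsequences are themselves a feasible collection of $k$ disjoint increasing subsequences after truncation, or more simply because $g_x$ counts distinct elements of $\omega$ that are covered by the optimum too — but the cleanest route is just that $h_k \geq 0$ is never actually needed, only the upper bound on $h_k$).
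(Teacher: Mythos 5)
Your proposal is correct and takes essentially the same route as the paper: both proofs use only inequality (\ref{eqn:relation_A_L_extended}) of Lemma~\ref{lemma:Lower}, recast it as the recurrence $g_x \geq \frac{k-1}{k}g_{x-1}+\frac{1}{k}|\OPT|$, and unroll it into the geometric bound $g_k\geq\bigl(1-(\frac{k-1}{k})^k\bigr)|\OPT|$. Your substitution $h_x=|\OPT|-g_x$ is merely a tidier bookkeeping of the paper's induction on $kg_x\geq\sum_{i=0}^{x-1}(\frac{k-1}{k})^i|\OPT|$, and you are in fact slightly more careful than the paper in justifying the final elementary step $(\frac{k-1}{k})^k<\frac{1}{e}$.
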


\begin{proof}
	We first claim that for each $x~(1\leq x\leq k)$,
	\begin{equation}\label{eqn:kG}
		k g_x  \geq \sum_{i=0}^{x-1} (\frac{k-1}{k})^i|\OPT|.
	\end{equation}
	
	It follows that
	\begin{equation}
		\begin{split}
			k g_k \geq& \sum_{i=0}^{k-1} (\frac{k-1}{k})^i|\OPT|\\
			=& \frac{(\frac{k-1}{k})^{k}-1}{\frac{k-1}{k}-1}|\OPT|
			= k\left(1-(\frac{k-1}{k})^{k}\right)|\OPT|,
		\end{split}		
	\end{equation}
	which implies the ratio in (\ref{eqn:ratio}).
	
	We prove the essential formula (\ref{eqn:kG}) by induction in the following.
	
	Note that $k g_1 = k|A_1| \geq |L_1|+\ldots+|L_k| = |\OPT|$, whereas
	$\sum_{i=0}^{0} (\frac{k-1}{k})^i=1$. So (\ref{eqn:kG}) holds for $x=1$.
	Next, assuming that (\ref{eqn:kG}) holds for $x=m-1$,
	we shall prove that it also holds for $x=m$.
	\begin{equation*}
		\begin{split}
			k g_m & =  k (|A_m| + g_{m-1}) = (k|A_m|+g_{m-1}) + \frac{k-1}{k}(kg_{m-1})\\
			&\geq |\OPT| + \frac{k-1}{k}\sum_{i=0}^{m-2} (\frac{k-1}{k})^i|\OPT|\\
			&= |\OPT| + \sum_{i=1}^{m-1} (\frac{k-1}{k})^i|\OPT|=\sum_{i=0}^{m-1} (\frac{k-1}{k})^i|\OPT|.
		\end{split}
	\end{equation*}
	Thus the formula (\ref{eqn:kG}) also holds for $x$, completing the proof.
\end{proof}

\subsection{Examples where $\frac{g_k}{|\OPT|}\approx \frac{3}{4}$ for any $k\geq 2$}

This subsection constructs an example, in which
\begin{equation}\label{equa:3/4}
	\frac{g_k}{|\OPT|}=\frac{1}{k^2}\lceil\frac{3}{4}k^2\rceil
\end{equation}

First, we define a $k\times k$ matrix $M$, whose rows are indexed with $1,\ldots,k$ from bottom to top,
and columns are indexed with $1,\ldots,k$ from left to right.
Each element in the $i$-th row has a value $i$.
\begin{table}[!ht]
	\centering
	\begin{tabular}{|c|c|c|} \hline 
		$k$ & $\ldots$ & $k$   \\ \hline
		$\vdots$ &	$\ddots$ &	$\vdots$ \\ \hline
		1 & $\ldots$ & 1 \\ \hline
	\end{tabular}
\end{table}

Next, build a sequence $\omega$ of length $k^2$ by concatenating the following diagonals of $M$ ---
$(M_{1,1})$,  $(M_{2,1}, M_{1,2})$, \ldots, $(M_{k,1},\ldots,M_{1,k})$, \ldots, $(M_{k,k-1},M_{k-1,k})$, $(M_{k,k})$.
Elements within each diagonal are enumerated from top left to bottom right;
and diagonals are enumerated from left to right; e.g. $\omega=(1,2,1,3,2,1,3,2,3)$ for $k=3$.

It is obvious that $\omega$ can be partitioned into $k$ subsequences, each of which is a copy of $(1,\ldots,k)$.
This means $|\OPT|=k^2$ when $\omega$ is the given input of the $k$-LIS problem.
In the following we show that the greedy algorithm may return with $g_k=\lceil\frac{3}{4}k^2\rceil$.

Let $D_{1-k},\ldots,D_{k-1}$ denote the $2k-1$ diagonals of $M$ parallel to the minor diagonal of $M$;
formally, $D_x~(1-k\leq x\leq k-1)$ contains $M_{i,j}$ if $j=i+x$.
We regard $D_x$ as a sequence rather than a set ---
the elements are listed from bottom left to top right;
e.g., $D_0=(M_{1,1},\ldots,M_{k,k})$.
In this way, each $D_x$ is a subsequence of $\omega$.

\begin{table}[!ht]
	\centering
	\begin{tabular}{|c|c|c|c|} \hline 
		$D_{1-k}$ & $\ldots$ & $D_{-1}$ & $D_0$   \\ \hline
		$\vdots$ & $\iddots$ & $\iddots$ &	$D_1$ \\ \hline
		$D_{-1}$ & $D_0$ & $\iddots$ &	$\vdots$ \\ \hline
		$D_0$ & $D_1$ & $\ldots$ & $D_{k-1}$ \\ \hline
	\end{tabular}
\end{table}

\begin{lemma}
	The greedy algorithm above may return
	the longest $k$ diagonals among $D_{1-k},\ldots, D_{k-1}$
	as its solution (namely, it returns $A_1=D_0$, $A_{2i}=D_i$, $A_{2i+1}=D_{-i}$),
	yielding a total length $\lceil\frac{3}{4}k^2\rceil$.
\end{lemma}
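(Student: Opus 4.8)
The plan is to verify two things: first, that the claimed sequence of greedy choices $A_1=D_0$, $A_{2i}=D_i$, $A_{2i+1}=D_{-i}$ is actually a legal execution of Algorithm~\ref{alg:greedy} (i.e.\ each $A_j$ is a longest increasing subsequence of the sequence remaining after $A_1,\ldots,A_{j-1}$ have been removed); and second, that the total length of the longest $k$ diagonals among $D_{1-k},\ldots,D_{k-1}$ equals $\lceil\tfrac34 k^2\rceil$. The second part is a routine length computation: diagonal $D_x$ has $k-|x|$ elements, so picking $D_0$ (length $k$) together with $D_{\pm1},\ldots,D_{\pm\lfloor (k-1)/2\rfloor}$ (and one extra if $k$ is even) gives total $\sum$ over the $k$ largest values of $\{k-|x|\}$; I would just evaluate this sum for $k$ even and $k$ odd separately and check it simplifies to $\lceil\tfrac34 k^2\rceil$.

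The substantive part is the first claim. I would proceed by induction on the round number $j$. The key structural facts I need are: (a) each $D_x$ is an increasing subsequence of $\omega$ — this follows because along $D_x$ the row index increases from $1$ up to $k-|x|$ (or from $1+\max(0,-x)$ upward), and the value of $M_{i,j}$ is exactly the row index $i$, so the values strictly increase; moreover the diagonals, read in the order $\omega$ lists them, are interleaved in a way that makes $D_x$ a genuine subsequence of $\omega$ (the $p$-th element of $D_x$ appears in the $(\text{something})$-th listed anti-diagonal of $\omega$, and these positions are increasing in $p$). (b) No increasing subsequence of $\omega$ is longer than $k$: every value in $\{1,\ldots,k\}$ may be used at most once in an increasing subsequence since equal values cannot both appear, so length is at most $k$; hence $D_0$, having length exactly $k$, is a valid first greedy pick $A_1$. (c) After removing $D_0,D_1,D_{-1},\ldots,D_i,D_{-i}$, the remaining sequence consists exactly of the diagonals $D_x$ with $|x|>i$, and in that remaining sequence a longest increasing subsequence has length $k-i-1$, which is achieved by $D_{i+1}$ (and also by $D_{-(i+1)}$). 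Fact (c) is what makes the greedy run legal: I must show that once all ``short'' diagonals near the main diagonal are gone, you genuinely cannot do better than one of the next-longest diagonals.

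The main obstacle, and where I would spend the most care, is proving the upper bound in (c): that after deleting the $2i+1$ central diagonals, every increasing subsequence of the leftover elements has length at most $k-i-1$. The intuition is a ``potential/charging'' argument: the leftover elements are those $M_{a,b}$ with $|a-b|>i$, i.e.\ lying strictly outside a band of width $2i+1$ around the main diagonal. An increasing subsequence must use strictly increasing values (strictly, since duplicates are forbidden within an increasing subsequence), and the values are the row indices; so an increasing subsequence of length $\ell$ uses $\ell$ distinct rows $a_1<a_2<\cdots<a_\ell$ with columns $b_1<b_2<\cdots<b_\ell$ (columns must increase too, since within $\omega$'s ordering of a fixed diagonal-structure, later elements of an increasing subsequence sit in later anti-diagonals, forcing $b$ to increase). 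Then $b_\ell-a_\ell \ge (b_1+\ell-1)-(a_1+\ell-1)+\text{(gap terms)}$ — more precisely, since $a_1\ge1$, $b_\ell\le k$, and the differences $b_t-a_t$ must all have the same sign pattern or transition through the forbidden band, one gets $\ell \le k - (i+1)$. I would formalize this by arguing the subsequence lies entirely in one of the two triangular regions $\{b-a>i\}$ or $\{b-a<-i\}$ (it cannot cross, since crossing would require passing through the removed band while keeping both coordinates monotone, which forces a repeated value), and within the region $\{b-a>i\}$ we have $a\le k-i-1$, bounding the number of distinct row indices and hence $\ell$. That containment-in-one-triangle step is the crux; everything else is bookkeeping, and I would state it as a short sub-claim before assembling the induction.
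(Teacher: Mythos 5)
Your outline of what needs to be proved is right, and you correctly identify the crux: after the $2i+1$ central diagonals are deleted, every increasing subsequence of the leftover has length at most $k-i-1$. But the mechanism you propose for that step --- that an increasing subsequence must lie entirely inside one of the two triangular regions because ``crossing would require passing through the removed band while keeping both coordinates monotone'' --- is false, and so is the auxiliary claim that the column indices of an increasing subsequence must increase. In $\omega$ the position of $M_{a,b}$ is ordered primarily by the anti-diagonal index $a+b$ and, within an anti-diagonal, by \emph{decreasing} row index $a$; since the value of $M_{a,b}$ is $a$, an increasing subsequence is exactly a sequence of entries with $a$ strictly increasing and $a+b$ strictly increasing, with no constraint forcing $b$ to be monotone. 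Concretely, take $k=5$ and suppose $D_0,D_1,D_{-1}$ have been removed ($i=1$, i.e.\ $j=2$ in the paper's notation). The entries $M_{1,3}$ (value $1$, bottom-right triangle, anti-diagonal $4$) and $M_{4,1}$ (value $4$, top-left triangle, anti-diagonal $5$) both survive, $M_{1,3}$ precedes $M_{4,1}$ in $\omega$, and $1<4$, so $(M_{1,3},M_{4,1})$ is an increasing subsequence of the leftover that crosses from one triangle to the other (and its column index drops from $3$ to $1$). So the ``containment in one triangle'' sub-claim, which you yourself flag as the step everything hinges on, does not hold.

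The bound itself is still true, and the paper's proof shows how to repair the argument: it splits into three cases according to where the subsequence starts and ends. If it starts in the top-left triangle, its first value is already at least $j+1$; if it ends in the bottom-right triangle, its last value is at most $k-j$; and if it starts bottom-right and ends top-left, then at the consecutive pair $(s_1,s_2)$ where it crosses, the inequalities $b_1-a_1\geq j$, $a_2-b_2\geq j$ and $a_1+b_1<a_2+b_2$ force the value to jump by more than $j$ (i.e.\ $a_2>a_1+j$), so $j$ of the $k$ available values are skipped. In every case the length is at most $k-j$. Your write-up needs this third case; as proposed, the induction step cannot be completed. The rest of your plan (each $D_x$ is an increasing subsequence, $|D_0|=k$ is a valid first pick, and the arithmetic giving $\lceil\tfrac34 k^2\rceil$) matches the paper and is fine.
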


\begin{proof}
	Clearly, $D_0$ can be chosen as $A_1$, because $|D_0|=k$.
	
	Suppose the lemma holds for the first $2j-1$ steps $j\geq 1$. Let $i = 2j$. Currently, the $2j-1$ longest subdiagonals of the table have been removed, leaving 2 symmetrical triangular areas at the top left and bottom right of the table. Let $T_{1}$ denote the top left triangular area and $T_{2}$ denote the bottom right triangular area.
	
	\begin{table}[!ht]
		\centering
		\begin{tabular}{|c|c|c|c|} \hline 
			k & $\ldots$ & k & $\times$   \\ \hline
			$\vdots$ & $\ddots$ & $\times$ &	$k-j$ \\ \hline
			$j+1$ & $\times$ & $\ddots$ &	$\vdots$ \\ \hline
			$\times$ & 1 & $\ldots$ & 1 \\ \hline
		\end{tabular}
	\end{table}
	
	Note that there are 2 subdiagonals with the length of $k-j$, one is $\{j+1,\ldots,k\}$, the other is $\{1,\ldots,k-j\}$, which are the longest subdiagonals at the moment. Now we prove that there is no increasing subsequence left that is longer than $k-j$.
	
	Suppose we pick an increasing subsequence $S$ from $\omega_{k\times k}$, we discuss its maximum length case by case.
	
	(\romannumeral1)If $S$ starts in $T_{1}$, it starts from a number no smaller than $j+1$. Since we only have $k$ numbers to form a sequence, $S$ can only be at most $k-j$ long.
	
	(\romannumeral2)If $S$ ends in $T_2$, it ends with a number no bigger than $k-j$. Since we only have $k$ numbers to form a sequence, $S$ can only be at most $k-j$ long.
	
	(\romannumeral3)If $S$ starts in $T_{2}$ and ends in $T_1$, it must have two consecutive numbers $s_1s_2$, so that $s_1$ is in $T_{2}$ and $s_2$ is in $T_{1}$. Then, we know that for  $s_1$ in $T_{2}$, we have $s_1+j$ to the left of it in $\omega_{k\times k}$. By observing the table, we know that for all possible options in $T_1$, we have $s_2>s_1+j$. Since we only have $k$ numbers to form a sequence, so $|S|\leq k-j$.
	
	Therefore, for all possible increasing subsequence $S$, we have $|S|\leq k-j$. So we take the two subdiagonals with the length of $k-j$ as the $D_{\max}$ for step $2j$ and $2j+1$, and they are the longest increasing subsequences at that time.
	
	\medskip The total length of $|A_1|+\ldots+|A_k|$ can be calculated as follows.
	\begin{equation*}
		\left\{
		\begin{array}{ll}
			k+2\sum_{i=1}^{\frac{k-1}{2}}(k-i)=\frac{3k^2+1}{4}=\lceil\frac{3}{4}k^2\rceil, & k\text{ is odd},\\
			k+2\sum_{i=1}^{\frac{k-2}{2}}(k-i)+\frac{k}{2}=\frac{3k^2}{4}=\lceil\frac{3}{4}k^2\rceil & k\text{ is even}.
		\end{array}\right.
	\end{equation*}
\end{proof}

\section{Summary \& future work}

We have shown that simple greedy algorithms achieve pretty small approximation ratio
in $k$-LIS and $k$-crashing problems. And the analysis is non-trivial.

Hopefully, the techniques developed in this paper can be used for analyzing greedy algorithms of other related problems.

\smallskip We would like to end up this paper with one challenging problem: Can we prove a constant approximation ratio for Algorithm~\ref{alg:greedy-crashing}?

\appendix






\bibliographystyle{ACM-Reference-Format}
\bibliography{k-crashing}


\end{document}